\documentclass[aps,prr,twocolumn,10pt,superscriptaddress]{revtex4-1}

\usepackage{mathrsfs}
\usepackage{bbm}
\usepackage{amsmath}
\usepackage{amssymb}
\usepackage{amsthm}
\usepackage{graphicx}
\usepackage{MnSymbol}
\usepackage{mathtools}
\usepackage{bbold}
\usepackage[italicdiff]{physics}
\usepackage{chemformula}

\makeatletter
\usepackage{hyperref}
\usepackage{color}
\definecolor{supcol}{RGB}{10,50,180}
\definecolor{eqcol}{RGB}{220,10,100}
\hypersetup{
	colorlinks,
	citecolor=supcol,
	linkcolor=eqcol,
	urlcolor=supcol
}

\allowdisplaybreaks

\newtheorem{theorem}{Theorem}

\newtheorem{proposition}[theorem]{Proposition}

\DeclareMathOperator{\atanh}{artanh}

\newcommand{\mca}{\mathcal}
\newcommand{\mbb}{\mathbb}

\newcommand{\sectionprl}[1]{{\em #1}\/.---}

\begin{document}
\title{Dissipation, quantum coherence, and asymmetry of finite-time cross-correlations}

\author{Tan Van Vu}
\email{tan.vu@riken.jp}

\affiliation{Analytical quantum complexity RIKEN Hakubi Research Team, RIKEN Center for Quantum Computing (RQC), 2-1 Hirosawa, Wako, Saitama 351-0198, Japan}

\author{Van Tuan Vo}

\affiliation{Department of Physics, Kyoto University, Kyoto 606-8502, Japan}

\author{Keiji Saito}

\affiliation{Department of Physics, Kyoto University, Kyoto 606-8502, Japan}

\date{\today}

\begin{abstract}
Recent studies have revealed a deep connection between the asymmetry of cross-correlations and thermodynamic quantities in the short-time limit. In this study, we address the finite-time domain of the asymmetry for both open classical and quantum systems. Focusing on Markovian dynamics, we show that the asymmetry observed in finite-time cross-correlations is upper bounded by dissipation. We prove that, for classical systems in a steady state with arbitrary operational durations, the asymmetry exhibits, at most, linear growth over time, with the growth speed determined by the rates of entropy production and dynamical activity. In the long-time regime, the asymmetry exhibits exponential decay, with the decay rate determined by the spectral gap of the transition matrix. Remarkably, for quantum cases, quantum coherence is equally important as dissipation in constraining the asymmetry of correlations. We demonstrate an example where only quantum coherence bounds the asymmetry while the entropy production rate vanishes. Furthermore, we generalize the short-time bounds on correlation asymmetry, as reported by Shiraishi [Phys.~Rev.~E 108, L042103 (2023)] and Ohga et al.~[Phys.~Rev.~Lett.~131, 077101 (2023)], to encompass finite-time scenarios. These findings offer novel insights into the thermodynamic aspects of correlation asymmetry.
\end{abstract}

\pacs{}
\maketitle

\section{Introduction}
Asymmetry, which refers to the absence of symmetry, constitutes a fundamental concept in physics.
The presence of asymmetry typically results in nontrivial and crucial consequences for a given system and has thus drawn considerable attention in various scientific disciplines.
Nonequilibrium thermodynamics is one such field where the concept of asymmetry is crucial \cite{Sekimoto.2010,Seifert.2012.RPP}.
For instance, the violation of time-reversal symmetry indicates the existence of nonequilibrium conditions, and the degree of such a violation is closely related to entropy production, with asymmetry being reflected by fluctuation theorems \cite{Gallavotti.1995.PRL,Jarzynski.1997.PRL,Kurchan.1998.JPA,Crooks.1999.PRE,Saito.2008.PRB,Esposito.2009.RMP}.
As manifested in the thermodynamic uncertainty relation \cite{Barato.2015.PRL,Gingrich.2016.PRL,Hasegawa.2019.PRL,Timpanaro.2019.PRL,Horowitz.2020.NP} and the entropic bound \cite{Dechant.2018.PRE}, the asymmetry of arbitrary currents is constrained by dissipation.
Another intriguing phenomenon is the relaxation asymmetry, which asserts that the heating process is faster than the cooling one \cite{Lapolla.2020.PRL,Vu.2021.PRR,Manikandan.2021.PRR}.
Furthermore, asymmetry can be harnessed to enhance the performance of heat engines \cite{Mondal.2020.PRE,Harunari.2021.PRR}.

Cross-correlation is a fundamental quantity that embodies both temporal and spatial information pertaining to physical systems.
Fluctuation-dissipation theorems \cite{Kubo.1991} establish that the response of a nonequilibrium steady-state system to a small perturbation can be expressed in terms of cross-correlation \cite{Seifert.2010.EPL}.
The investigation of correlation properties has progressed in various directions, including linear response theory \cite{Crisanti.2003.JPA,Marconi.2008.PR}, speed limits for auto-correlation \cite{Mohan.2022.PRA,Carabba.2022.Q,Hasegawa.2023.arxiv,Dechant.2023.PRL,Mori.2023.PRL}, and thermodynamic inference of entropy production using the correlation between different observables \cite{Dechant.2021.PRX}, to name a few.
Recently, a deep connection between the asymmetry of cross-correlations and thermodynamic quantities has been reported for steady-state systems in the {\it short-time} limit \cite{Ohga.2023.PRL,Shiraishi.2023.PRE}, providing novel insights into the circulation of fluctuation \cite{Tomita.1974.PTP} and coherent oscillations \cite{Barato.2017.PRE}.
From a dynamic standpoint, correlation asymmetry can be related to physical quantities in nonequilibrium systems such as odd viscosity in fluid dynamics \cite{Fruchart.2023.ARCMP}, kinetic  fluxes in reaction networks \cite{Qian.2004.PNAS}, and the temporal ordering of cellular events in living cells \cite{Sisan.2010.BJ}.
Simultaneously, correlation asymmetry should also contain information on the time-reversal symmetry breaking for the whole time regime \cite{Onsager.1931.PR.I,Onsager.1931.PR.II}.
Thus, it is quite important to unveil the in-depth relationship between the asymmetry and the thermodynamic costs in the entire {\it finite-time} domain, as well as to explore quantum effects on correlation.

\begin{figure}[b]
\centering
\includegraphics[width=1\linewidth]{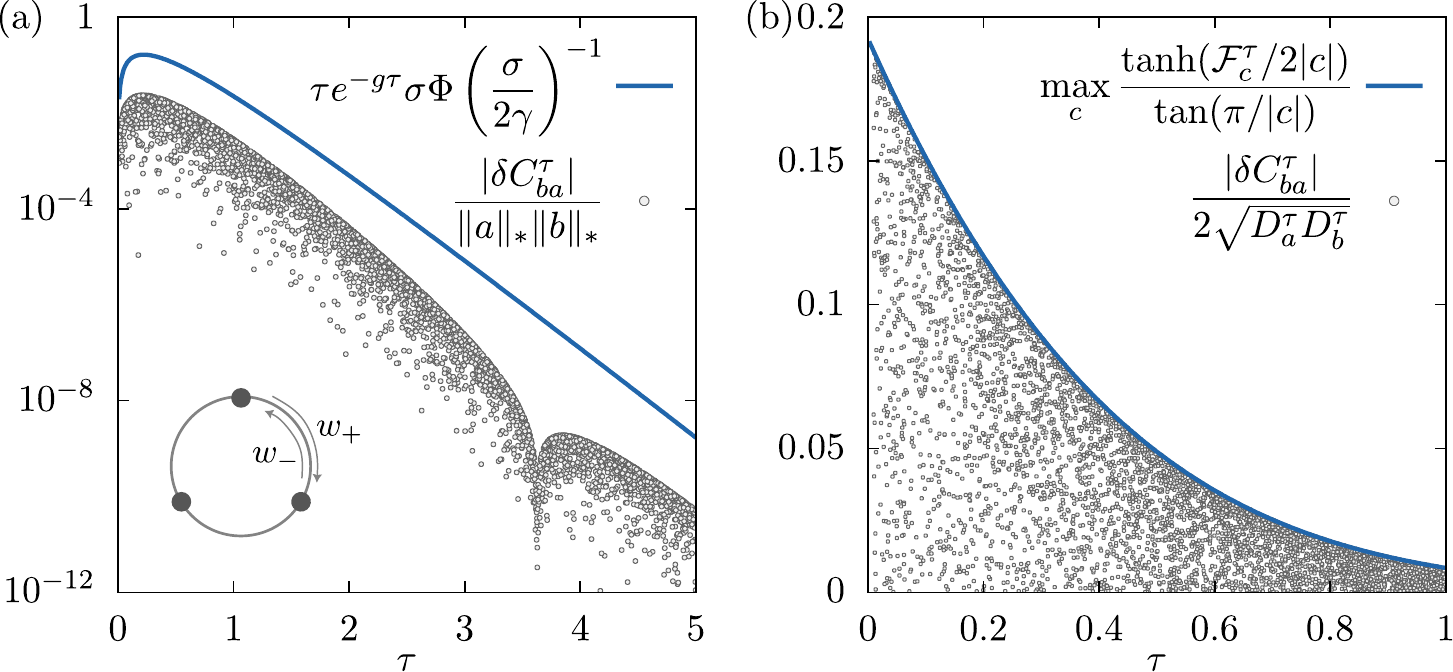}
\protect\caption{Numerical illustration of thermodynamic bounds on the asymmetry of cross-correlations in terms of (a) entropy production [cf.~Eq.~\eqref{eq:asym.cc.fnt.1}] and (b) thermodynamic affinity [cf.~Eq.~\eqref{eq:asym.cc.fnt.aff}] in a three-state biochemical oscillation \cite{Barato.2017.PRE}. Observables $a$ and $b$ are randomly sampled in the range $[-1,1]$, and the forward and backward transition rates are $w_+=2$ and $w_-=1$, respectively.}\label{fig:Cover}
\end{figure}

In the present paper, we address these open problems by considering classical and quantum Markov processes of discrete-state systems whose initial state is a stationary state.
We prove that the asymmetry exhibited in finite-time cross-correlations is always bounded from above by the thermodynamic costs for the whole time regime.
Specifically, for classical Markov jump processes, we show that the asymmetry grows at most linear in time, with the velocity determined by the rates of entropy production and dynamical activity; in the long-time regime, it exponentially decays with the rate determined by the spectral gap of the transition matrix [cf.~Eq.~\eqref{eq:asym.cc.fnt.1}].
These provide a basic picture of the asymmetry in generic Markov processes. 
Variants of this structure are found in other cases with multi-time and multi-observables, as well as in other dynamics such as discrete-time Markov chains, continuous-state overdamped Langevin systems, and open quantum systems. 
Remarkably, by considering the Lindblad master equations for open quantum systems, we find that quantum coherence plays a crucial role in the asymmetry of correlations [cf.~Eq.~\eqref{eq:qua.asym.cc}].
We demonstrate that degeneracy in the spectrum of the steady-state density matrix can lead to a nontrivial phenomenon wherein only quantum coherence is responsible for a finite asymmetry with zero entropy production.
In addition, we provide finite-time generalizations for the short-time bounds reported in Refs.~\cite{Ohga.2023.PRL,Shiraishi.2023.PRE} [cf.~Eqs.~\eqref{eq:asym.cc.fnt.2} and \eqref{eq:asym.cc.fnt.aff}].
These findings provide thermodynamic bounds on physically relevant quantities in various dynamics and further deepen our understanding of the asymmetry of correlations from the thermodynamic perspective.

\section{Setup}
We consider a Markov jump process described by the master equation:
\begin{equation}\label{eq:mas.eq}
    \ket{\dot p_t}={W}\ket{p_t},
\end{equation}
where the dot $\cdot$ denotes the time derivative, $\ket{p_t}=[p_1(t),\dots,p_N(t)]^\top$ is the probability distribution at time $t$, and ${W}=[w_{mn}]\in\mbb{R}^{N\times N}$ denotes the time-independent transition matrix with $w_{mn}\ge 0$ being the jump rate from state $n$ to $m\,(\neq n)$ and $w_{nn}=-\sum_{m(\neq n)}w_{mn}$.
We assume the local detailed balance $\ln(w_{mn}/w_{nm})=\Delta s_{mn}$, i.e., the log of the ratio of transition rates is related to the entropy change in the environment $\Delta s_{mn}$.
Hereafter, we consider the case that the system is in a nonequilibrium steady state $\ket{\pi}$.
Let $j_{mn}\coloneqq w_{mn}\pi_n-w_{nm}\pi_m$ be the steady-state probability current; then, the master equation \eqref{eq:mas.eq} implies $\sum_mj_{nm}=0$.
Two quantities of importance are the rates of entropy production and dynamical activity, defined as
\begin{align}
    \sigma&\coloneqq\sum_{m>n}j_{mn}\ln\frac{w_{mn}\pi_n}{w_{nm}\pi_m},\\
    \gamma&\coloneqq\sum_{m>n}(w_{mn}\pi_n+w_{nm}\pi_m).
\end{align}
Qualitatively, $\sigma$ quantifies the degree of thermodynamic irreversibility, whereas $\gamma$ reflects the timescale of the system \cite{Maes.2020.PR}.
Another relevant quantity is dynamical state mobility \cite{Vu.2023.PRX}, which characterizes the response of probability currents against thermodynamic forces and is defined as
\begin{equation}
	\kappa\coloneqq\sum_{m>n}\frac{j_{mn}}{\ln(w_{mn}\pi_n/w_{nm}\pi_m)}.
\end{equation}
The relation $\kappa\le\gamma/2$ holds in general.

Next, we introduce cross-correlation and some notations to be used in this study.
Let $\ket{a}=[a_1,\dots,a_N]^\top$ and $\ket{b}=[b_1,\dots,b_N]^\top$ be arbitrary observables.
The two-time cross-correlation between these two observables can be defined as
\begin{equation}
    C_{ba}^\tau\coloneqq\ev{b(\tau)a(0)},
\end{equation}
where $a(t)\,[b(t)]$ takes the value of $a_n\,[b_n]$ if the system is in state $n$ at time $t$ and the average $\ev{\cdot}$ is over all stochastic trajectories of time period $\tau$.
Defining $\Pi={\rm diag}(\pi_1,\dots,\pi_N)$, then the cross-correlation can be analytically expressed as $C_{ba}^\tau=\mel{b}{e^{{W}\tau}\Pi}{a}$.
We are interested in the asymmetry of cross-correlations
\begin{equation}
	\delta C_{ba}^\tau\coloneqq C_{ba}^\tau-C_{ab}^\tau,
\end{equation}
which vanishes in equilibrium. However, it is not the case for nonequilibrium situations.
Qualitatively, $\delta C_{ba}^\tau$ reflects symmetry breaking in the causality of observations and can reveal essential aspects of system dynamics, such as the extent to which the system deviates from equilibrium.
Several thermodynamic bounds for this quantity have recently been derived in the $\tau\to 0$ limit \cite{Ohga.2023.PRL,Shiraishi.2023.PRE}.
In this study, we focus on the entire {\it finite-time} regime.

Let $\{\lambda_n\}$ be the set of eigenvalues and $\{\ket{v_n^l},\ket{v_n^r}\}$ be the left and right eigenvectors of ${W}$, respectively (i.e., $\bra{v_n^l}{W}=\bra{v_n^l}\lambda_n$ and ${W}\ket{v_n^r}=\lambda_n\ket{v_n^r}$).
The largest eigenvalue $\lambda_1=0$ is associated with the eigenvector $\bra{v_1^l}\propto\bra{1}$, whereas all other eigenvalues have a negative real part, $0>\Re{\lambda_2}\ge\dots\ge \Re{\lambda_N}$.
Here, $\ket{1}$ denotes the all-one vector.
The eigenvectors are normalized, $\braket{v_n^l}=1$.
Since $\{\ket{v_n^l}\}$ forms a basis of $\mbb{C}^N$, we can always find coefficients $\{\tilde{z}_n\}$ for any vector $\ket{z}$ such that $\ket{z}=\sum_n\tilde{z}_n\ket{v_n^l}$.
Hereafter, we define the $\ell_1$-norm $\|z\|_*\coloneqq\sum_{n\ge 2}|\tilde{z}_n|$.
An important quantity is the spectral gap $g\coloneqq-\Re{\lambda_2}>0$, which corresponds to the slowest decay mode and characterizes the relaxation speed of the system.

\section{Main results}
Given the above setup, we are now ready to explain our results; a simple numerical illustration is presented in Fig.~\ref{fig:Cover}.

\subsection{First main result and quantum generalization}
Our first main result is a thermodynamic bound on the asymmetry of finite-time cross-correlations:
\begin{equation}\label{eq:asym.cc.fnt.1}
    \frac{|\delta C_{ba}^\tau|}{\|a\|_*\|b\|_*}\le\tau e^{-g\tau}\sigma\Phi\qty(\frac{\sigma}{2\gamma})^{-1}\le 2\tau e^{-g\tau}\sqrt{\sigma\kappa},
\end{equation}
where $\Phi(u)$ denotes the inverse function of $u\tanh(u)$ and satisfies $\Phi(u)\ge\max\{u,\sqrt{u}\}\,\forall u\ge 0$.
The proof is presented in Appendix \ref{app:proof.main}. 
Bound \eqref{eq:asym.cc.fnt.1} implies that the thermodynamic costs govern the asymmetry over the whole time domain.
In the small-time regime, the asymmetry of cross-correlations increases at most linear in time with speed constrained by the entropy production and dynamical activity rates.
On the other hand, in the large-time regime, the asymmetry exponentially decays with the rate of the spectral gap.
The result can be analogously generalized to other scenarios, such as discrete-time Markov chains, continuous-state overdamped Langevin systems, and multiple observables, as shown in Appendix \ref{app:gen}.

Some remarks on this finding are given in order.
(i) First, it is of fundamental importance as it relates two physically important concepts (i.e., dissipation and correlation asymmetry) for \emph{arbitrary} times.
While the bound may not be tight, it essentially captures both the dynamic and thermodynamic properties of correlation asymmetry.
An improved saturable bound, which implicitly exhibits the exponential decay, can be found in Appendix \ref{app:improv.bound}.
(ii) Second, notice that $\delta C_{ba}^\tau$ is identical with the difference between cross-correlations in the original dynamics and the dual dynamics \cite{Crooks.2000.PRE} with transition rates $\widetilde{{W}}=\Pi{W}^\dagger\Pi^{-1}$ \cite{fnt1}.
Therefore, the bound \eqref{eq:asym.cc.fnt.1} can also be interpreted as a thermodynamic bound for the discrepancy between these dynamics in terms of observables.
(iii) Third, in other relevant contexts, $\delta C_{ba}^\tau$ can be employed as a natural measure to study interactions in complex systems \cite{Nolte.2006.PRE}, dynamic co-localization, diffusion, binding in living cells \cite{Bacia.2006.NM}, and information flow in biological systems \cite{Ohga.2023.PRL,Horowitz.2014.PRX}.
Thus, the inequality \eqref{eq:asym.cc.fnt.1} not only provides information on the temporal behavior of such a measure but also describes the constraints imposed by the thermodynamic costs.
Particularly, in the case of active fluids, it has been shown that odd viscosity can be expressed as the time integral of correlation asymmetry \cite{Hargus.2020.JCP}.
Our finding thus yields a thermodynamic bound on odd viscosity, indicating that it is constrained by dissipation and the reciprocal of the spectral gap.
(iv) Last, an effective approach to estimating the spectral gap $g$ approximately, without knowing the details of the underlying dynamics, can be deduced from our result. 
In practice, calculating the spectral gap $g$ requires knowledge of the transition matrix, which is generally unavailable in experiments. 
The bound derived here suggests that $g$ can be estimated as the decay slope of the correlation asymmetry, which is experimentally accessible.

Next, we extend the result \eqref{eq:asym.cc.fnt.1} to quantum cases, which include autonomous thermal engines \cite{Mitchison.2019.CP,Manzano.2021.PRL} and quantum measurement processes \cite{Wiseman.2009}.
We consider a Markovian open quantum system, which is weakly coupled to a single or multiple reservoirs.
The time evolution of the reduced density matrix is described by the Lindblad equation, $\dot\varrho_t=\mca{L}(\varrho_t)$ \cite{Lindblad.1976.CMP,Gorini.1976.JMP}, where
\begin{equation}
	\mca{L}(\varrho)\coloneqq -i[H,\varrho]+\sum_k(L_k\varrho L_k^\dagger-\{L_k^\dagger L_k,\varrho\}/2)
\end{equation}
with $H$ is the time-independent Hamiltonian and $\{L_k\}$ denote jump operators.
In order to be thermodynamically consistent, we assume the local detailed balance condition \cite{Horowitz.2013.NJP,Manzano.2022.QS}; that is, the jump operators come in pairs $(k,k')$ such that $L_k=e^{\Delta s_k/2}L_{k'}^\dagger$, where $\Delta s_k$ denotes the entropy change in the environment.
Let $\pi$ be the steady state and $\pi=\sum_n\pi_n\dyad{n}$ be its spectral decomposition.
The rates of irreversible entropy production and dynamical activity are given by $\sigma=\sum_k\tr{L_k\pi L_k^\dagger}\Delta s_k$ and $\gamma=\sum_k\tr{L_k\pi L_k^\dagger}$, respectively.
The system is measured by the eigenbasis $\{\dyad{n}\}$ at both the initial and final times.
Note that this two-point measurement scheme does not alter the steady state of the system.
Define observables ${A}\coloneqq\sum_na_n\dyad{n}$ and ${B}\coloneqq\sum_nb_n\dyad{n}$.
As will be shown later, quantum properties emerge even for this measurement basis.
In this case, the cross-correlation can be analytically calculated as \cite{fnt2}
\begin{equation}
	C_{ba}^\tau=\ev{b(\tau)a(0)}=\tr{{B} e^{\mca{L}\tau}({A}\pi)}.
\end{equation}
Let $\tilde{\mca{L}}$ be an adjoint superoperator of $\mca{L}$, defined as
\begin{equation}
	\tilde{\mca{L}}(\varrho)\coloneqq i[H,\varrho]+\sum_k(L_k^\dagger\varrho L_k-\{L_k^\dagger L_k,\varrho\}/2).
\end{equation}
Note that both $\tilde{\mca{L}}$ and $\mca{L}$ have the same eigenvalue spectrum, $\tilde{\mca{L}}({V}_n)=\lambda_n{V}_n$, where $0=\lambda_1>\Re{\lambda_2}\ge\dots$ and the eigenvectors are normalized such that $\|{V}_n\|_\infty=1$.
For any operator ${X}$, let ${X}_t\coloneqq e^{\tilde{\mca{L}}t}({X})$ be the time-evolved operator in the Heisenberg picture, and $\|X\|_*\coloneqq\sum_{n\ge 2}|z_n^x|$ be the $\ell_1$-norm of $X$, where $X=\sum_nz_n^xV_n$.
The definition of the spectral gap is analogous to the classical case, $g\coloneqq-\Re{\lambda_2}$.

With the above setup in place, we obtain a quantum extension of the classical result \eqref{eq:asym.cc.fnt.1}, indicating that the asymmetry of cross-correlations is limited by both dissipation and quantum coherence (see Appendix \ref{app:qua} for the proof):
\begin{equation}
	|\delta C_{ba}^\tau|\le \tau e^{-g\tau}\qty[\mca{C} + \|A\|_*\|B\|_*\sigma\Phi\qty(\frac{\sigma}{2\gamma})^{-1}].\label{eq:qua.asym.cc}
\end{equation}
Here, $\mca{C}$ is a quantum coherence term that quantifies the amount of quantum coherence generated in the time-evolved observables in the Heisenberg picture, given by
\begin{equation}
	\mca{C}\coloneqq \Theta\int_0^1\dd{s}\qty[ \|B\|_*{C}_{\ell_1}(A_{(1-s)\tau}) + \|A\|_*{C}_{\ell_1}(B_{s\tau}) ],
\end{equation}
where $\Theta\coloneqq 4\|H\|_\infty+3\sum_k\|L_k\|_\infty^2$, $\|\cdot\|_\infty$ denotes the operator norm, and ${C}_{\ell_1}(X_t)\coloneqq e^{gt}\sum_{m\neq n}|\mel{m}{X_t}{n}|$ is the $\ell_1$-norm of quantum coherence in the eigenbasis \cite{Baumgratz.2014.PRL}.
Note that ${C}_{\ell_1}(X_t)$ is always upper bounded by $N(N-1)\|X\|_*$ for all $t\ge 0$.
Roughly speaking, a nonvanishing value of $\mca{C}$ signifies the ability to generate quantum coherence in the original dynamics.
In the classical limit (e.g., $H=0$ and $L_k\propto\dyad{m}{n}$), $\mca{C}$ vanishes.
Bound \eqref{eq:qua.asym.cc} establishes a qualitative and quantitative relationship between asymmetry, thermodynamic irreversibility, and quantum coherence.
Remarkably, $\delta C_{ba}^\tau$ can be nonzero even when $\sigma=0$, indicating that quantum coherence is inevitable in the bound (see Appendix \ref{app:qua.demon} for an analytical analysis).
While this bound may not be strict, it yields a crucial implication that coherent manipulations can be exploited to break the symmetry of correlations even in the absence of dissipation.
Note that $C_{\ell_1}$ defined for operators is different from the conventional quantum coherence in quantum states; nonetheless, they share the same essence of quantum coherence in quantum dynamics.
Specifically, the coherence term $C_{\ell_1}(A_t/B_t)$ is exactly a weighted sum of quantum coherence generated at time $t$ with respect to initial incoherent states $\{\dyad{n}\}$ in the Heisenberg picture \cite{fnt3}.
Due to the importance of the Heisenberg picture in both dynamic and thermodynamic aspects, this kind of quantum coherence is also physically relevant as the conventional one in the Schr{\"o}dinger picture.

\begin{figure}[t]
\centering
\includegraphics[width=1\linewidth]{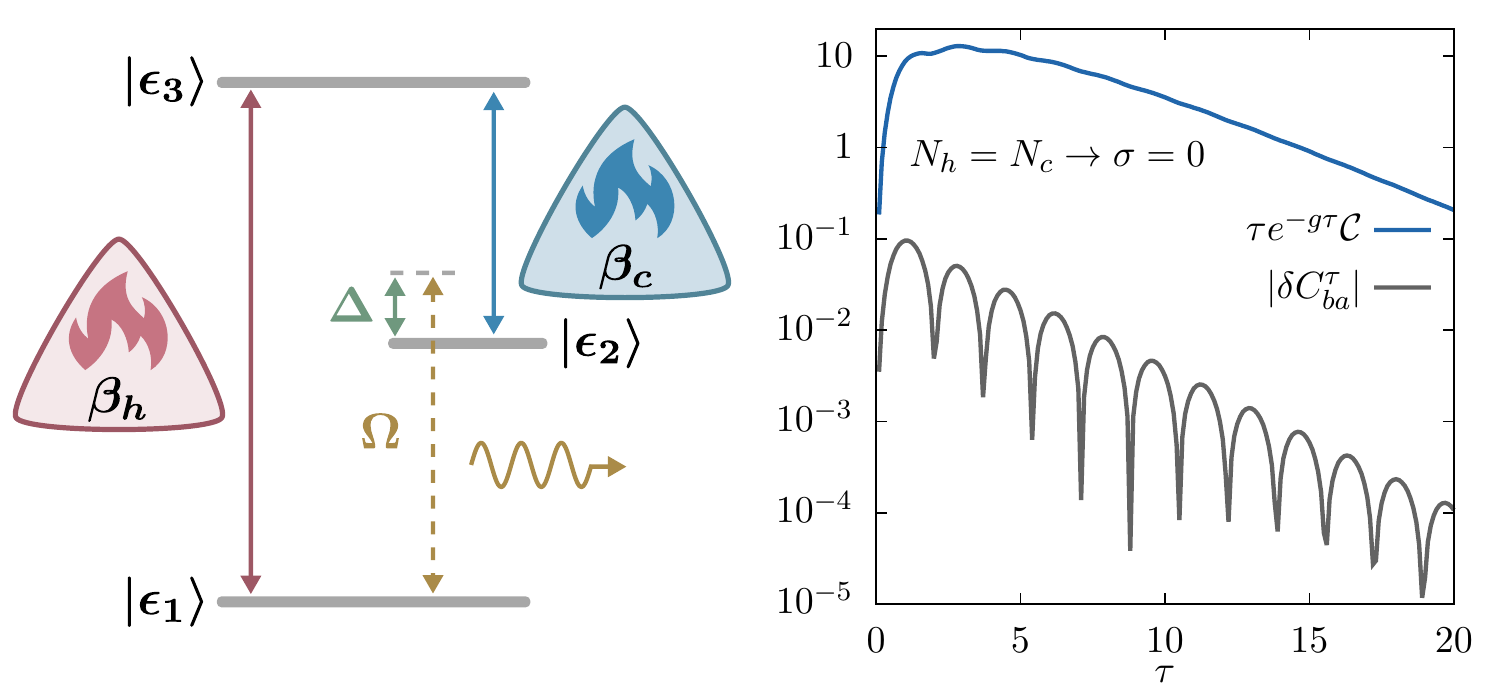}
\protect\caption{Numerical illustration of the inevitable role of quantum coherence in restricting the asymmetry of cross-correlations in the three-level maser. Observables are given by $A=\dyad{1}$ and $B=\dyad{3}$. Parameters are $\Delta=0.01$, $\Omega=1$, $\alpha_h=1$, $\alpha_c=0.01$, $N_h=N_c=1$, $\phi=\pi/2$, and $\theta=\pi/4$. Note that $\sigma =0$. Nevertheless, a finite asymmetry appears.}\label{fig:Coherence}
\end{figure}

We numerically illustrate this critical role of quantum coherence in a three-level maser \cite{Scovil.1959.PRL}, which can operate as a heat engine or refrigerator.
In a rotating frame \cite{Boukobza.2007.PRL,Kalaee.2021.PRE,Vu.2022.PRL.TUR}, the time evolution of the density matrix can be described by the Hamiltonian $H=-\Delta\sigma_{22}+\Omega(\sigma_{12}+\sigma_{21})$ and the jump operators $L_{1}=\sqrt{\alpha_h(N_h+1)}\sigma_{13}$, $L_{1'}= \sqrt{\alpha_hN_h}\sigma_{31}$, $L_{2}=\sqrt{\alpha_c(N_c+1)}\sigma_{23}$, and $L_{2'}=\sqrt{\alpha_cN_c}\sigma_{32}$, where $\sigma_{ij}=\dyad{\epsilon_i}{\epsilon_j}$ (see Appendix \ref{app:qua.demon} for details).
We exclusively consider the $N_h=N_c$ case, in which $\sigma=0$.
Due to degeneracy in the spectrum of $\pi$, the measurement basis can be chosen as $\ket{1}=e^{i\phi}\cos\theta\ket{\epsilon_1}+\sin\theta\ket{\epsilon_2}$, $\ket{2}=-\sin\theta\ket{\epsilon_1}+e^{-i\phi}\cos\theta\ket{\epsilon_2}$, and $\ket{3}=\ket{\epsilon_3}$, where $\phi$ and $\theta$ are arbitrary real numbers.
As confirmed in Fig.~\ref{fig:Coherence}, the asymmetry of correlations does not vanish and is bounded solely by the quantum coherence term $\mca{C}$.
It is worth noting that the origin of this correlation asymmetry arises from external coherent control, which induces degeneracy in the stationary state \cite{fnt4}.
By exploiting this phenomenon through a suitable choice of measurement basis, symmetry breaking of correlations can be achieved.

\subsection{Second and third main results}
In the sequel, we revisit \emph{classical} systems and focus on some quantifications of the normalized asymmetry that have been considered in the literature.
Our second main result is a thermodynamic bound for the normalized asymmetry only in terms of entropy production (see Appendix \ref{app:main2.proof} for the proof):
\begin{equation}\label{eq:asym.cc.fnt.2}
	\frac{|\delta C_{ba}^\tau|^2}{D_a^\tau+D_b^\tau}\le\tau\sigma\min\qty{\|a^2+b^2\|_\infty,\qty[\frac{\max_c\ell_c}{2N\tan(\pi/N)}]^2},
\end{equation}
where $D_a^\tau\coloneqq C_{aa}^0-C_{aa}^\tau$ is the decay of auto-correlation \cite{Ohga.2023.PRL}, $\|a^2+b^2\|_\infty\coloneqq\max_n(a_n^2+b_n^2)$, the maximum is over all cycles of a uniform cycle decomposition \cite{Schnakenberg.1976.RMP,Pietzonka.2016.JPA}, and $\ell_c\coloneqq\sum_{i}\sqrt{(a_{n_i}-a_{n_{i+1}})^2+(b_{n_i}-b_{n_{i+1}})^2}$ is the length of cycle $c=(n_1,\dots,n_{|c|})$.
Here, $|c|$ denotes the size of $c$ and $|c|+1\equiv 1$.
In the short-time limit, bound \eqref{eq:asym.cc.fnt.2} can be reduced to the extant result obtained by Shiraishi \cite{Shiraishi.2023.PRE} [see Eqs.~(6) and (8) therein].
Therefore, it can be regarded as a finite-time generalization of Shiraishi's bound.
Since the normalized asymmetry is experimentally measurable, our bound can be used to estimate entropy production based on trajectory data obtained from experiments, similar to the tool provided by the thermodynamic uncertainty relation \cite{Li.2019.NC,Manikandan.2020.PRL,Vu.2020.PRE,Otsubo.2020.PRE}.
It is noteworthy that this bound is tight and saturable in the short-time limit \cite{Shiraishi.2023.PRE}, while it may not be the case for finite times due to the fact that the asymmetry of cross-correlations decays exponentially, as shown in Eq.~\eqref{eq:asym.cc.fnt.1}.
This leads us to anticipate the existence of exponentially decaying bounds.
Investigating such bounds is left as future work.

So far, we have demonstrated that entropy production serves as a limit for the asymmetry of cross-correlations in finite time.
Our third primary finding is an additional constraint expressed in terms of thermodynamic affinity, given by
\begin{equation}\label{eq:asym.cc.fnt.aff}
\frac{|\delta C_{ba}^\tau|}{2\sqrt{D_a^\tau D_b^\tau}}\le\max_c\frac{\tanh(\mca{F}_c^\tau/2|c|)}{\tan(\pi/|c|)}\le\max_c\frac{\mca{F}_c^\tau}{2\pi},	
\end{equation}
where $\mca{F}_c^\tau$ is the thermodynamic affinity associated with a cycle $c=(n_1,\dots,n_{|c|})$ in temporal coarse-grained dynamics with timescale $\tau$:
\begin{equation}
	\mca{F}_c^\tau\coloneqq\ln\frac{w_{n_2n_1}^\tau w_{n_3n_2}^\tau\dots w_{n_1n_{|c|}}^\tau}{w_{n_1n_2}^\tau w_{n_2n_3}^\tau\dots w_{n_{|c|}n_1}^\tau}.
\end{equation}
Here, $w_{mn}^\tau\coloneqq[e^{{W}\tau}]_{mn}$ is the conditional transition probability from state $n$ to $m$ within time $\tau$.
The proof is presented in Appendix \ref{app:main3.proof}.
As shown in Fig.~\ref{fig:Cover}(b), this bound is tight and can be saturated.
For example, in the three-state biochemical oscillation with homogeneous transition rates, the equality can be attained for an arbitrary time $\tau$ with observables $\ket{a}=[\sin(2\pi n/3)]_n^\top$ and $\ket{b}=[\cos(2\pi n/3)]_n^\top$ (see Appendix \ref{app:main3.proof} for details).
In this case, $|\delta C_{ba}^\tau|$ quantifies coarse-grained oscillation asymmetry by considering only the initial and final times.
In the short- and long-time limits, we can show that $\lim_{\tau\to 0}\mca{F}_c^\tau=\mca{F}_c$ and $\lim_{\tau\to\infty}\mca{F}_c^\tau=0$, where $\mca{F}_c$ denotes the thermodynamic affinity defined for the transition rate matrix \cite{fnt5}.
Therefore, bound \eqref{eq:asym.cc.fnt.aff} can be considered a finite-time generalization of the extant bound reported by Ohga and coworkers \cite{Ohga.2023.PRL}.
Additionally, it provides an estimation of the maximum thermodynamic affinity in temporal coarse-grained dynamics.
This is highly relevant from an experimental point of view due to the limited measurement resolution and may provide new insights into determining dissipative timescales in terms of thermodynamic affinity \cite{Cisneros.2023.JSM}.

\section{Summary and discussion}
We derived finite-time thermodynamic bounds for the asymmetry of cross-correlations in terms of entropy production and thermodynamic affinity.
The results universally apply to various dynamics, from discrete to continuous time and space domains.
At the fundamental level, our findings indicate that dissipation limits the asymmetry of cross-correlations across the entire time regime.
This not only generalizes the principle of microscopic reversibility to nonequilibrium situations but also yields thermodynamic constraints on physical functions, such as viscosity in active fluids and signal transduction in biological systems.
In the context of quantum systems, we have elucidated the pivotal role of quantum coherence in breaking the symmetry of correlations, offering fresh insights into the relationship between asymmetry, dissipation, and quantum coherence.
From a practical standpoint, our results can be applied to infer the spectral gap and dissipative quantities, such as entropy production and coarse-grained thermodynamic affinity.

It is also worthwhile to explore applications of our results in biochemical systems, where the concepts of correlation and symmetry breaking play crucial roles in the performance of systems \cite{Govern.2014.PRL,Cao.2015.NP,Barato.2017.PRE,Seara.2021.NC,Leighton.2022.PRL,Liang.2022.arxiv}.
Furthermore, it would be intriguing to develop our bounds in other quantum scenarios, such as when the system is measured in a basis different from the eigenbasis of the steady state.
Quantum measurements have a unique impact that can influence the asymmetry of correlations, causing them to persist even when the system is initially in equilibriums.
Such investigations could provide valuable insights into the behavior of quantum systems and potentially lead to advancements in harnessing the merits of quantum measurements.

\begin{acknowledgments}
The authors thank Tomotaka Kuwahara for fruitful discussions.
{T.V.V.}~was supported by JSPS KAKENHI Grant Number JP23K13032.
{K.S.}~was supported by JSPS KAKENHI Grant Numbers JP23H01099, JP19H05603, and JP19H05791.
\end{acknowledgments}

\appendix

\onecolumngrid

\section{Proof of Eq.~\eqref{eq:asym.cc.fnt.1} and an improved bound}

\subsection{Proof of Eq.~\eqref{eq:asym.cc.fnt.1}}\label{app:proof.main}
For convenience, we define $J\coloneqq [j_{mn}]$, which is the matrix of probability currents.
By simple algebraic calculations, we can transform the asymmetry of cross-correlations as follows:
\begin{align}
    \delta C_{ba}^\tau&=\mel{b}{e^{{W}\tau}\Pi}{a}-\mel{a}{e^{{W}\tau}\Pi}{b}\notag\\
    &=\mel{b}{e^{{W}\tau}\Pi}{a}-\mel{b}{\Pi e^{{W}^\dagger\tau}}{a}\notag\\
    &=\mel{b}{(e^{{W}\tau}\Pi-\Pi e^{{W}^\dagger\tau})}{a}.\label{eq:asym.cc.tmp0}
\end{align}
Here, we use the fact $z=z^\dagger$ for any $z\in\mbb{R}$ in the second line.
Note that for any matrices ${X}$ and ${Y}$, the following equality always holds:
\begin{equation}
    e^{{X}}\Pi-\Pi e^{{Y}}=\int_0^1\dd{s}e^{s{X}}({X}\Pi-\Pi{Y})e^{(1-s){Y}}.
\end{equation}
Applying ${X}={{W}}\tau$ and ${Y}={{W}}^\dagger\tau$ and noting that ${J} ={{W}}\Pi-\Pi{{W}}^\dagger$, we can proceed Eq.~\eqref{eq:asym.cc.tmp0} further as follows:
\begin{align}
    \delta C_{ba}^\tau&=\tau\int_0^1\dd{s}\mel{b}{e^{s{{W}}\tau}{J} e^{(1-s){{W}}^\dagger\tau}}{a}\notag\\
    &=\tau\int_0^1\dd{s}\tr{{J} e^{(1-s){{W}}^\dagger\tau}\dyad{a}{b}e^{s{{W}}\tau}}.\label{eq:asym.cc.tmp1}
\end{align}
Noting that $\ket{a}=\sum_n\tilde{a}_n\ket{v_n^l}$, $\ket{b}=\sum_n\tilde{b}_n\ket{v_n^l}$, and $\int_0^1\dd{s}e^{(1-s)x+sy}=(e^x-e^y)/(x-y)$, we can simplify the terms in Eq.~\eqref{eq:asym.cc.tmp1} as
\begin{align}
    \delta C_{ba}^\tau&=\tau\int_0^1\dd{s}\sum_{m,n}\tr{{J}\tilde{a}_m\tilde{b}_n^*e^{(1-s)\lambda_m^*\tau+s\lambda_n\tau}\dyad{v_m^l}{v_n^l}}\notag\\
    &=\tau\tr{{J}\sum_{m,n\ge 2}\tilde{a}_m\tilde{b}_n^*\frac{e^{\lambda_m^*\tau}-e^{\lambda_n\tau}}{(\lambda_m^*-\lambda_n)\tau}\dyad{v_m^l}{v_n^l}},\label{eq:asym.cc.comp}
\end{align}
where we use the facts that $\bra{v_1^l}{J}=\bra{0}$ and ${J}\ket{v_1^l}=\ket{0}$ to obtain the last line.
Here, $\ket{0}$ denotes the all-zero vector.
Before proceeding further, we note some useful inequalities (see Appendix \ref{app:useful.ines} for the proof). 
For any matrices ${X}$ and ${Y}$ and orthogonal basis $\{\ket{n}\}$, we have $\qty|\tr{{X}{Y}}|\le\|{Y}\|_\infty\sum_{m,n}|\mel{m}{{X}}{n}|$, where $\|{Y}\|_\infty$ denotes the operator norm of ${Y}$.
In addition, for any complex numbers $z_1$ and $z_2$ with a negative real part (i.e., $\Re{z_1}\le 0$ and $\Re{z_2}\le 0$), we always have $|e^{z_1}-e^{z_2}|/|z_1-z_2|\le 1$.
Using the expression \eqref{eq:asym.cc.comp} and applying the above inequalities, we can evaluate as follows:
\begin{align}
    |\delta C_{ba}^\tau|&\le\tau\sum_{m,n}|j_{mn}|\left\|\sum_{m,n\ge 2}\tilde{a}_m\tilde{b}_n^*\frac{e^{\lambda_m^*\tau}-e^{\lambda_n\tau}}{(\lambda_m^*-\lambda_n)\tau}\dyad{v_m^l}{v_n^l}\right\|_\infty\notag\\
    &\le\tau\sum_{m,n}|j_{mn}|\sum_{m,n\ge 2}|\tilde{a}_m\tilde{b}_n^*|\qty|\frac{e^{\lambda_m^*\tau}-e^{\lambda_n\tau}}{(\lambda_m^*-\lambda_n)\tau}|.\label{eq:asym.cc.tmp2}
\end{align}
Next, we evaluate the last term in Eq.~\eqref{eq:asym.cc.tmp2}.
For $m\ge n\ge 2$, since $\Re{\lambda_m^*-\lambda_n}\le 0$ and $\Re{\lambda_n}\le-g$, we thus have 
\begin{align}
    \qty|\frac{e^{\lambda_m^*\tau}-e^{\lambda_n\tau}}{(\lambda_m^*-\lambda_n)\tau}|&=|e^{\lambda_n\tau}|\qty|\frac{e^{(\lambda_m^*-\lambda_n)\tau}-1}{(\lambda_m^*-\lambda_n)\tau}|\le e^{-g\tau}.
\end{align}
Likewise, for $n\ge m\ge 2$, since $\Re{\lambda_n-\lambda_m^*}\le 0$ and $\Re{\lambda_m^*}\le-g$, we also obtain the same argument.
Applying these inequalities to Eq.~\eqref{eq:asym.cc.tmp2}, we readily obtain
\begin{equation}\label{eq:asym.cc.tmp3}
    |\delta C_{ba}^\tau|\le \tau e^{-g\tau}\|a\|_*\|b\|_*\sum_{m,n}|j_{mn}|.
\end{equation}
Furthermore, we can prove that the sum of absolute probability currents is bounded from above by the rates of entropy production, dynamical activity, and dynamical state mobility as \cite{Vu.2023.PRX}
\begin{equation}
    \sum_{m,n}|j_{mn}|\le\sigma\Phi\qty(\frac{\sigma}{2\gamma})^{-1}\le 2\sqrt{\sigma\kappa}.\label{eq:asym.cc.tmp4}
\end{equation}
Combining Eqs.~\eqref{eq:asym.cc.tmp3} and \eqref{eq:asym.cc.tmp4} yields the desired result \eqref{eq:asym.cc.fnt.1} in the main text.

\subsection{A quantitative improvement of bound \eqref{eq:asym.cc.fnt.1} and numerical demonstration}\label{app:improv.bound}

Here we demonstrate that the bound \eqref{eq:asym.cc.fnt.1} can be quantitatively improved; however, this improvement comes at the cost of \emph{implicitly} exhibiting the exponential decay.
To this end, we define $\ket{x(t)}\coloneqq (e^{W^\dagger t}-\dyad{1}{\pi})\ket{x}$ for $x\in\{a,b\}$.
Using this notation and Eq.~\eqref{eq:asym.cc.tmp1}, the correlation asymmetry can be calculated as
\begin{align}
    \delta C_{ba}^\tau&=\tau\int_0^1\dd{s}\mel{b(s\tau)}{J}{a((1-s)\tau)}\notag\\
    &=\tau\sum_{m\neq n}\int_0^1\dd{s}j_{mn}a_n((1-s)\tau)b_m(s\tau)\notag\\
    &=\tau\sum_{m>n}\int_0^1\dd{s}j_{mn}\qty[a_n((1-s)\tau)b_m(s\tau)-a_m((1-s)\tau)b_n(s\tau)].
\end{align}
Defining $\gamma_{mn}\coloneqq w_{mn}\pi_n+w_{nm}\pi_m$ and applying the Cauchy-Schwarz inequality, we can evaluate the asymmetry from above as
\begin{align}
	|\delta C_{ba}^\tau|&\le \tau\int_0^1\dd{s}\sqrt{\sum_{m>n}\frac{j_{mn}^2}{\gamma_{mn}}\sum_{m>n}\qty[a_n((1-s)\tau)b_m(s\tau)-a_m((1-s)\tau)b_n(s\tau)]^2\gamma_{mn}}\notag\\
	&=\tau\sqrt{\sum_{m>n}\frac{j_{mn}^2}{\gamma_{mn}}}\int_0^1\dd{s}\sqrt{\sum_{m>n}\qty[a_n((1-s)\tau)b_m(s\tau)-a_m((1-s)\tau)b_n(s\tau)]^2\gamma_{mn}}.\label{eq:asym.cc.ptmp1}
\end{align}
Note that the first term in Eq.~\eqref{eq:asym.cc.ptmp1} is the pseudo entropy production rate and can be upper bounded as \cite{Vo.2022.JPA}
\begin{equation}
	\sum_{m>n}\frac{j_{mn}^2}{\gamma_{mn}}\le \frac{\sigma^2}{4\gamma}\Phi\qty(\frac{\sigma}{2\gamma})^{-2}.
\end{equation}
Defining 
\begin{equation}
	D_{ba}^\tau\coloneqq \int_0^1\dd{s}\sqrt{\sum_{m>n}\qty[a_n((1-s)\tau)b_m(s\tau)-a_m((1-s)\tau)b_n(s\tau)]^2\gamma_{mn}},\label{eq:Dabt.def}
\end{equation}
we immediately obtain an improvement for Eq.~\eqref{eq:asym.cc.fnt.1} as
\begin{equation}
	|\delta C_{ba}^\tau|\le \tau D_{ba}^\tau\frac{\sigma}{2\sqrt{\gamma}}\Phi\qty(\frac{\sigma}{2\gamma})^{-1},\label{eq:improved.bound}
\end{equation}
where $D_{ba}^\tau$ exponentially decays over time.
Notably, this improved bound is tight and can be saturated, for example, in a three-state biochemical oscillation (see Fig.~\ref{fig:Demo} for the numerical demonstration). 
We can also prove that
\begin{equation}
	D_{ba}^\tau\le 2e^{-g\tau}\|a\|_*\|b\|_*\sqrt{\gamma}.\label{eq:Dabt.ub}
\end{equation}
Using this relation, the bound \eqref{eq:asym.cc.fnt.1} in the main text can be recovered.

\begin{figure}[b]
\centering
\includegraphics[width=0.9\linewidth]{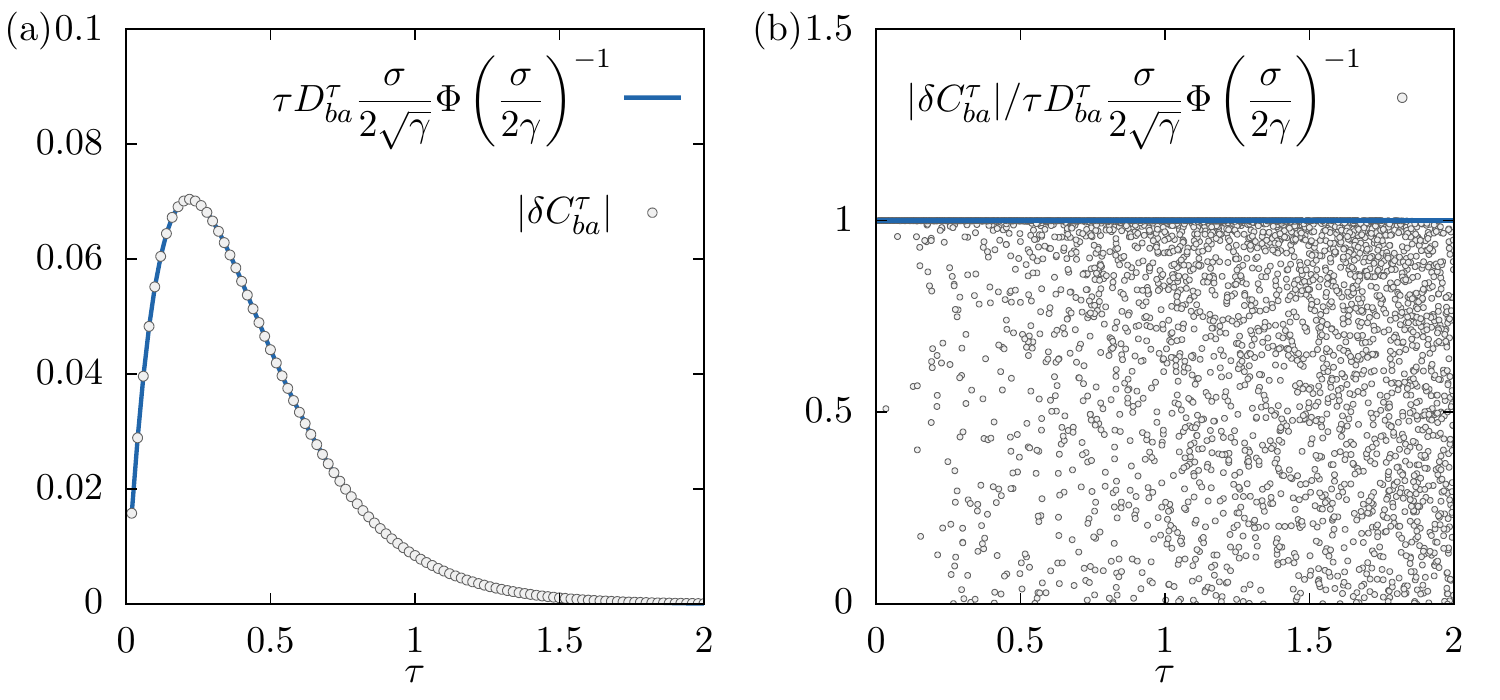}
\protect\caption{Numerical illustration of the improved bound \eqref{eq:improved.bound} in a three-state biochemical oscillation. The forward and backward transition rates are $w_+=2$ and $w_-=1$, respectively. (a) Observables are fixed as $\ket{a}=[\sin(2\pi n/3)]_n^\top$ and $\ket{b}=[\cos(2\pi n/3)]_n^\top$ while time $\tau$ is varied. (b) Observables $a$ and $b$ are randomly sampled in the range $[-1,1]$ for each random time $\tau\in[0,2]$.}\label{fig:Demo}
\end{figure} 

\sectionprl{Proof of Eq.~\eqref{eq:Dabt.ub}}Note that
\begin{equation}
	e^{W^\dagger t}=\dyad{1}{\pi}+\sum_{n\ge 2}e^{\lambda_n^*t}\dyad{v_n^l}{v_n^r}.
\end{equation}
We begin by upper bounding the term inside the integral in Eq.~\eqref{eq:Dabt.def} as follows:
\begin{align}
	&\sum_{m>n}\qty[a_n((1-s)\tau)b_m(s\tau)-a_m((1-s)\tau)b_n(s\tau)]^2\gamma_{mn}\notag\\
	&\le 4\|a((1-s)\tau)\|_\infty^2\|b(s\tau)\|_\infty^2\sum_{m>n}\gamma_{mn}\\
	&=4\gamma\|\sum_{n\ge 2}e^{\lambda_n^*(1-s)\tau}\ket{v_n^l}\braket{v_n^r}{a}\|_\infty^2\|\sum_{n\ge 2}e^{\lambda_n^*s\tau}\ket{v_n^l}\braket{v_n^r}{b}\|_\infty^2\notag\\
	&=4\gamma\|\sum_{n\ge 2}e^{\lambda_n^*(1-s)\tau}\tilde{a}_n\ket{v_n^l}\|_\infty^2\|\sum_{n\ge 2}e^{\lambda_n^*s\tau}\tilde{b}_n\ket{v_n^l}\|_\infty^2\notag\\
	&\le 4\gamma e^{-2g\tau}\qty(\sum_{n\ge 2}\|e^{(\lambda_n^*+g)(1-s)\tau}\tilde{a}_n\ket{v_n^l}\|_\infty)^2\qty(\sum_{n\ge 2}\|e^{(\lambda_n^*+g)s\tau}\tilde{b}_n\ket{v_n^l}\|_\infty)^2\notag\\
	&\le 4\gamma e^{-2g\tau}\|a\|_*^2\|b\|_*^2.
\end{align}
Consequently, $D_{ba}^\tau$ is upper bounded as
\begin{equation}
	D_{ba}^\tau\le \int_0^1\dd{s}2e^{-g\tau}\|a\|_*\|b\|_*\sqrt{\gamma}=2e^{-g\tau}\|a\|_*\|b\|_*\sqrt{\gamma}.
\end{equation}

\section{Generalizations of Eq.~\eqref{eq:asym.cc.fnt.1} to other cases}\label{app:gen}
\subsection{Generalization to discrete-time Markov chains}
Here we provide a generalization of Eq.~\eqref{eq:asym.cc.fnt.1} for discrete-time Markov chains.
We consider a time-homogeneous irreducible Markov chain whose dynamics is governed by the master equation:
\begin{equation}
	\ket{p_{t_i}}={R}\ket{p_{t_{i-1}}}.
\end{equation}
Here, ${R}=[R_{mn}]$ is the stochastic matrix with $R_{mn}\ge 0$ the transition probability from state $n$ to $m$. 
The normalization condition $\sum_mR_{mn}=1$ is satisfied for all $n$.
We consider a finite duration $\tau$ of $K$ steps (i.e., $t_0=0$ and $t_K=\tau$).
The system is in a nonequilibrium steady state $\ket{\pi}$ (i.e., $\ket{\pi}={R}\ket{\pi}$).
The steady-state entropy production and dynamical activity at each time step are given by
\begin{align}
	\sigma&\coloneqq\sum_{m\neq n}R_{mn}\pi_n\ln\frac{R_{mn}\pi_n}{R_{nm}\pi_m},\\
	\gamma&\coloneqq\sum_{m\neq n}R_{mn}\pi_n.
\end{align}
The cross-correlation between the observables can be expressed as
\begin{equation}
	C_{ba}^\tau\coloneqq\ev{b(\tau)a(0)}=\mel{b}{{R}^K\Pi}{a}.
\end{equation}
Let $\{\lambda_n\}$ be the set of eigenvalues of ${R}$ and $\{\ket{v_n^l},\ket{v_n^r}\}$ be the left and right eigenvectors, respectively (i.e., $\bra{v_n^l}{R}=\bra{v_n^l}\lambda_n$ and ${R}\ket{v_n^r}=\lambda_n\ket{v_n^r}$).
Note that $1=\lambda_1>|\lambda_2|\ge\dots$ and $\ket{v_1}\propto\ket{1}$.
The spectral gap can thus be defined as $g\coloneqq -\ln|\lambda_2|>0$.
Evidently, $|\lambda_n|\le e^{-g}$ for any $n\ge 2$.

Now, we can calculate the asymmetry of cross-correlations as follows:
\begin{align}
	\delta C_{ba}^\tau&=\mel{b}{{R}^K\Pi}{a}-\mel{a}{{R}^K\Pi}{b}\notag\\
	&=\mel{b}{{R}^K\Pi - \Pi({R}^\dagger)^K}{a}.
\end{align}
Note that for any matrices ${X}$ and ${Y}$, the following relation holds:
\begin{equation}
{X}^K\Pi-\Pi{Y}^K=\sum_{k=0}^{K-1}{X}^{k}({X}\Pi-\Pi{Y}){Y}^{K-1-k}.	
\end{equation}
Applying this relation for ${X}={R}$ and ${Y}={R}^\dagger$ and noting that ${J}={R}\Pi-\Pi{R}^\dagger$, we can proceed further as
\begin{align}
	\delta C_{ba}^\tau&=\mel{b}{{R}^K\Pi - \Pi({R}^\dagger)^K}{a}\notag\\
	&=\sum_{k=0}^{K-1}\mel{b}{{R}^k({R}\Pi-\Pi{R}^\dagger)({R}^\dagger)^{K-1-k}}{a}\notag\\
	&=\sum_{k=0}^{K-1}\tr{{J}({R}^\dagger)^{K-1-k}\dyad{a}{b}{R}^k}\notag\\
	&=\sum_{k=0}^{K-1}\sum_{m,n}\tr{{J} a_mb_n^*(\lambda_m^*)^{K-1-k}\lambda_n^{k}\dyad{v_m^l}{v_n^l}}.
\end{align} 
Note that ${J}\ket{1}=\ket{0}$ and $\bra{1}{J}=\bra{0}$.
Consequently, the asymmetry of cross-correlations can be upper bounded as
\begin{align}
	|\delta C_{ba}^\tau|&=\qty|\sum_{k=0}^{K-1}\sum_{m,n\ge 2}\tr{{J} a_mb_n^*(\lambda_m^*)^{K-1-k}\lambda_n^{k}\dyad{v_m^l}{v_n^l}}|\notag\\
	&\le \sum_{m,n}|j_{mn}|\left\|\sum_{k=0}^{K-1}\sum_{m,n\ge 2}a_mb_n^*(\lambda_m^*)^{K-1-k}\lambda_n^{k}\dyad{v_m^l}{v_n^l}\right\|_\infty\notag\\
	&\le Ke^{-(K-1)g}\sum_{m,n}|j_{mn}|\sum_{m,n\ge 2}|a_mb_n^*|\notag\\
	&\le Ke^{-(K-1)g}\|a\|_*\|b\|_*\sigma\Phi\qty(\frac{\sigma}{2\gamma})^{-1},
\end{align}
which yields the desired generalization for discrete-time systems:
\begin{equation}\label{eq:asym.cc.dis.fnt}
    \frac{|\delta C_{ba}^{\tau}|}{\|a\|_*\|b\|_*}\le K e^{-(K-1)g}\sigma\Phi\qty(\frac{\sigma}{2\gamma})^{-1}.
\end{equation}

\subsection{Generalization to overdamped Langevin dynamics}
We consider a $d$-dimensional overdamped Langevin system.
Let $p_t(\vb*{x})$ denote the probability density function of finding the system in state $\vb*{x}$ at time $t$.
The time evolution of $p_t(\vb*{x})$ is described by the Fokker-Planck equation:
\begin{equation}
\dot p_t(\vb*{x})=\mca{L}[p_t(\vb*{x})]=-\vb*{\nabla}\cdot\vb*{j}_t(\vb*{x}),
\end{equation}
where $\mca{L}[p(\vb*{x})]\coloneqq -\vb*{\nabla}\cdot[\vb*{f}(\vb*{x})p(\vb*{x})-{D}\vb*{\nabla}p(\vb*{x})]$ is the Fokker-Planck operator, $\vb*{f}(\vb*{x})$ is the force vector, and ${D}={\rm diag}(D_1,\dots,D_d)$ is the matrix of diffusion coefficients.
Consider the adjoint operator $\tilde{\mca{L}}$, which is defined as
\begin{equation}
\tilde{\mca{L}}[p(\vb*{x})]\coloneqq	\vb*{f}(\vb*{x})\cdot\vb*{\nabla}p(\vb*{x})+\vb*{\nabla}\cdot{D}\vb*{\nabla}p(\vb*{x}).
\end{equation}
The operator $\tilde{\mca{L}}$ is also known as the generator of the backward Fokker-Planck equation.
Define the inner product
\begin{equation}
	\ev{u(\vb*{x}),v(\vb*{x})}\coloneqq\int\dd{\vb*{x}} u(\vb*{x})v(\vb*{x}).
\end{equation}
For any functions $u(\vb*{x})$ and $v(\vb*{x})$ such that $u(\vb*{x})\vb*{f}(\vb*{x})v(\vb*{x})$, $u(\vb*{x}){D}\vb*{\nabla}v(\vb*{x})$, and $v(\vb*{x}){D}\vb*{\nabla}u(\vb*{x})$ vanish at infinity, we prove that
\begin{equation}
	\ev{u(\vb*{x}),\mca{L}[v(\vb*{x})]}=\ev{v(\vb*{x}),\tilde{\mca{L}}[u(\vb*{x})]}.
\end{equation}
Indeed, by exploiting the boundary conditions, we can show as
\begin{align}
	\ev{u(\vb*{x}),\mca{L}[v(\vb*{x})]}&=-\int\dd{\vb*{x}} u(\vb*{x})\vb*{\nabla}\cdot[\vb*{f}(\vb*{x})v(\vb*{x})-{D}\vb*{\nabla}v(\vb*{x})]\notag\\
	&=\int\dd{\vb*{x}}[\vb*{f}(\vb*{x})v(\vb*{x})-{D}\vb*{\nabla}v(\vb*{x})]\cdot\vb*{\nabla}u(\vb*{x})\notag\\
	&=\int\dd{\vb*{x}}[v(\vb*{x})\vb*{f}(\vb*{x})\cdot\vb*{\nabla}u(\vb*{x})+v(\vb*{x})\vb*{\nabla}\cdot{D}\vb*{\nabla}u(\vb*{x})]\notag\\
	&=\int\dd{\vb*{x}} v(\vb*{x})\qty[\vb*{f}(\vb*{x})\cdot\vb*{\nabla}u(\vb*{x})+\vb*{\nabla}\cdot{D}\vb*{\nabla}u(\vb*{x})]\notag\\
	&=\ev{v(\vb*{x}),\tilde{\mca{L}}[u(\vb*{x})]}.
\end{align}
Using this relation, we can immediately derive that
\begin{equation}
	\ev{u(\vb*{x}),e^{\mca{L}\tau}[v(\vb*{x})]}=\ev{v(\vb*{x}),e^{\tilde{\mca{L}}\tau}[u(\vb*{x})]}.\label{eq:ovd.adj}
\end{equation}
We consider the case where the system is in a nonequilibrium steady state $\pi(\vb*{x})$ (i.e., $\mca{L}[\pi(\vb*{x})]=-\vb*{\nabla}\cdot\vb*{j}(\vb*{x})=0$).
The entropy production rate can be calculated as
\begin{equation}
	\sigma=\int\dd{\vb*{x}}\frac{\vb*{j}(\vb*{x})\cdot{D}^{-1}\vb*{j}(\vb*{x})}{\pi(\vb*{x})}.
\end{equation}
The cross-correlation between observables $a(\vb*{x})$ and $b(\vb*{x})$ can be expressed as
\begin{equation}
	C_{ba}^\tau\coloneqq\ev{b(\vb*{x}_\tau)a(\vb*{x}_0)}=\int\dd{\vb*{x}} b(\vb*{x})e^{\mca{L}\tau}\qty[a(\vb*{x})\pi(\vb*{x})].
\end{equation}
Using the equality \eqref{eq:ovd.adj}, the asymmetry of cross-correlations can thus be calculated as
\begin{align}
	\delta C_{ba}^\tau&=\int\dd{\vb*{x}} b(\vb*{x})e^{\mca{L}\tau}\qty[a(\vb*{x})\pi(\vb*{x})]-\int\dd{\vb*{x}} a(\vb*{x})e^{\mca{L}\tau}\qty[b(\vb*{x})\pi(\vb*{x})]\notag\\
	&=\int\dd{\vb*{x}} b(\vb*{x})\qty{ e^{\mca{L}\tau}[a(\vb*{x})\pi(\vb*{x})] - \pi(\vb*{x})e^{\tilde{\mca{L}}\tau}[a(\vb*{x})] }\notag\\
	&=\int\dd{\vb*{x}} b(\vb*{x})\int_0^1\dd{s}\frac{d}{ds}{ e^{s\mca{L}\tau}\qty{e^{(1-s)\tilde{\mca{L}}\tau}[a(\vb*{x})]\pi(\vb*{x})} }\notag\\
	&=\tau\int\dd{\vb*{x}} b(\vb*{x})\int_0^1\dd{s} e^{s\mca{L}\tau}\qty{\mca{L}[\pi(\vb*{x})\circ] - \pi(\vb*{x})\tilde{\mca{L}}[\circ]}\qty[e^{(1-s)\tilde{\mca{L}}\tau}a(\vb*{x})].\label{eq:ovd.tmp4}
\end{align}
From the stationarity, we also obtain that
\begin{align}
	&\mca{L}[\pi(\vb*{x})q(\vb*{x})] - \pi(\vb*{x})\tilde{\mca{L}}[q(\vb*{x})]\notag\\
	&=-\vb*{\nabla}\cdot\qty{\vb*{f}(\vb*{x})\pi(\vb*{x})q(\vb*{x})-{D}\vb*{\nabla}[\pi(\vb*{x})q(\vb*{x})]} - \pi(\vb*{x})\qty[\vb*{f}(\vb*{x})\cdot\vb*{\nabla}q(\vb*{x})+\vb*{\nabla}\cdot{D}\vb*{\nabla}q(\vb*{x})]\notag\\
	&=-2\vb*{f}(\vb*{x})\pi(\vb*{x})\cdot\vb*{\nabla}q(\vb*{x})-q(\vb*{x})\vb*{\nabla}\cdot[\vb*{f}(\vb*{x})\pi(\vb*{x})]+\vb*{\nabla}\cdot{D}\vb*{\nabla}[\pi(\vb*{x})q(\vb*{x})]-\pi(\vb*{x})\vb*{\nabla}\cdot{D}\vb*{\nabla}q(\vb*{x})\notag\\
	&=-2\vb*{f}(\vb*{x})\pi(\vb*{x})\cdot\vb*{\nabla}q(\vb*{x})-q(\vb*{x})\vb*{\nabla}\cdot[\vb*{f}(\vb*{x})\pi(\vb*{x})]+\vb*{\nabla}\pi(\vb*{x})\cdot{D}\vb*{\nabla}q(\vb*{x})+\vb*{\nabla}q(\vb*{x})\cdot{D}\vb*{\nabla}\pi(\vb*{x})+q(\vb*{x})\vb*{\nabla}\cdot{D}\vb*{\nabla}\pi(\vb*{x})\notag\\
	&=-2\vb*{\nabla}q(\vb*{x})\cdot[\vb*{f}(\vb*{x})\pi(\vb*{x})-{D}\vb*{\nabla}\pi(\vb*{x})]-q(\vb*{x})\vb*{\nabla}\cdot[\vb*{f}(\vb*{x})\pi(\vb*{x})-{D}\vb*{\nabla}\pi(\vb*{x})]\notag\\
	&=-2\vb*{\nabla}q(\vb*{x})\cdot\vb*{j}(\vb*{x}).\label{eq:ovd.tmp5}
\end{align}
Let $\{\lambda_n\}$ be the discrete spectrum of operator $\tilde{\mca{L}}$ and $\{\varphi_n(\vb*{x})\}$ be the set of corresponding eigenfunctions:
\begin{align}
	\tilde{\mca{L}}\varphi_n(\vb*{x})=\lambda_n\varphi_n(\vb*{x}).
\end{align}
Assume that $a(\vb*{x})$ and $b(\vb*{x})$ can be expanded in terms of the eigenfunctions as
\begin{align}
	a(\vb*{x})&=\sum_{n}a_n\varphi_n(\vb*{x}),\\
	b(\vb*{x})&=\sum_{n}b_n\varphi_n(\vb*{x}).
\end{align}
The eigenvalue $\lambda_1=0$ corresponds to the eigenfunction $\varphi_1(\vb*{x})=1$.
Other eigenvalues have a negative real part, $0>\Re{\lambda_2}\ge \Re{\lambda_3}\ge \dots$; thus, the spectral gap can be defined as $g\coloneqq -\Re{\lambda_2}>0$. 
Using Eqs.~\eqref{eq:ovd.tmp4} and \eqref{eq:ovd.tmp5}, we can evaluate the asymmetry of cross-correlations as
\begin{align}
	\delta C_{ba}^\tau&=-\tau\int_0^1\dd{s}\int\dd{\vb*{x}} b(\vb*{x}) e^{s\mca{L}\tau}\qty[\qty{2\vb*{\nabla}e^{(1-s)\tilde{\mca{L}}\tau}a(\vb*{x})}\cdot\vb*{j}(\vb*{x})]\notag\\
	&=-\tau\int_0^1\dd{s}\int\dd{\vb*{x}}\qty[\qty{2\vb*{\nabla}e^{(1-s)\tilde{\mca{L}}\tau}a(\vb*{x})}\cdot\vb*{j}(\vb*{x})]e^{s\tilde{\mca{L}}\tau}[b(\vb*{x})]\notag\\
	&=-\tau\int_0^1\dd{s}\int\dd{\vb*{x}}\qty[\qty{\sum_n2\vb*{\nabla}e^{(1-s)\tau\lambda_n} a_n\varphi_n(\vb*{x})}\cdot\vb*{j}(\vb*{x})]\sum_me^{s\tau\lambda_m}b_m\varphi_m(\vb*{x})\notag\\
	&=-\tau\int\dd{\vb*{x}}\int_0^1\dd{s}\sum_{m,n}e^{s\tau\lambda_m+(1-s)\tau\lambda_n}b_ma_n\qty[2\varphi_m(\vb*{x})\vb*{\nabla}\varphi_n(\vb*{x})]\cdot\vb*{j}(\vb*{x})\notag\\
	&=-\tau\int\dd{\vb*{x}}\sum_{m,n}\frac{e^{\lambda_m\tau}-e^{\lambda_n\tau}}{(\lambda_m-\lambda_n)\tau}b_ma_n\qty[2\varphi_m(\vb*{x})\vb*{\nabla}\varphi_n(\vb*{x})]\cdot\vb*{j}(\vb*{x})\notag\\
	&=-\tau\int\dd{\vb*{x}}\sum_{m,n\ge 2}\frac{e^{\lambda_m\tau}-e^{\lambda_n\tau}}{(\lambda_m-\lambda_n)\tau}b_ma_n\qty[2\varphi_m(\vb*{x})\vb*{\nabla}\varphi_n(\vb*{x})]\cdot\vb*{j}(\vb*{x})\notag\\
	&=\tau\int\dd{\vb*{x}}\sum_{m,n\ge 2}\frac{e^{\lambda_m\tau}-e^{\lambda_n\tau}}{(\lambda_m-\lambda_n)\tau}b_ma_n\qty[\varphi_n(\vb*{x})\vb*{\nabla}\varphi_m(\vb*{x})-\varphi_m(\vb*{x})\vb*{\nabla}\varphi_n(\vb*{x})]\cdot\vb*{j}(\vb*{x})\notag\\
	&=\tau e^{-g\tau}\int\dd{\vb*{x}}\vb*{\varphi}_\tau(\vb*{x})\cdot\vb*{j}(\vb*{x}),
\end{align}
where we use the fact $\int\dd{\vb*{x}}[\varphi_1(\vb*{x})\vb*{\nabla}\varphi_n(\vb*{x})]\cdot\vb*{j}(\vb*{x})=\int\dd{\vb*{x}}[\varphi_m(\vb*{x})\vb*{\nabla}\varphi_1(\vb*{x})]\cdot\vb*{j}(\vb*{x})=0$ and define
\begin{equation}
	\vb*{\varphi}_\tau(\vb*{x})\coloneqq\sum_{m,n\ge 2}\frac{e^{(\lambda_m+g)\tau}-e^{(\lambda_n+g)\tau}}{(\lambda_m-\lambda_n)\tau}b_ma_n\qty[\varphi_n(\vb*{x})\vb*{\nabla}\varphi_m(\vb*{x})-\varphi_m(\vb*{x})\vb*{\nabla}\varphi_n(\vb*{x})].
\end{equation}
Applying the Cauchy-Schwarz inequality, we obtain
\begin{align}
	|\delta C_{ba}^\tau|&\le\tau e^{-g\tau}\qty[\int\dd{\vb*{x}}\frac{\vb*{j}(\vb*{x})\cdot{D}^{-1}\vb*{j}(\vb*{x})}{\pi(\vb*{x})}]^{1/2}\qty[\int\dd{\vb*{x}}\pi(\vb*{x})\vb*{\varphi}_\tau(\vb*{x})\cdot{D}\vb*{\varphi}_\tau(\vb*{x})]^{1/2}\notag\\
	&=\tau e^{-g\tau}\sqrt{\sigma}\qty[\int\dd{\vb*{x}}\pi(\vb*{x})\vb*{\varphi}_\tau(\vb*{x})\cdot{D}\vb*{\varphi}_\tau(\vb*{x})]^{1/2}.\label{eq:ovd.tmp1}
\end{align}
For any vector $\vb*{z}=[z_1,\dots,z_d]^\top$, define $|\vb*{z}|\coloneqq[|z_1|,\dots,|z_d|]^\top$.
Since
\begin{equation}
	\qty| \frac{e^{(\lambda_m+g)\tau}-e^{(\lambda_n+g)\tau}}{(\lambda_m-\lambda_n)\tau} |\le 1~\forall m,n\ge 2,
\end{equation}
we can upper bound the last term in Eq.~\eqref{eq:ovd.tmp1} as
\begin{align}
	\int\dd{\vb*{x}}\pi(\vb*{x})\vb*{\varphi}_\tau(\vb*{x})\cdot{D}\vb*{\varphi}_\tau(\vb*{x})\le \int\dd{\vb*{x}}\pi(\vb*{x})\vb*{\phi}(\vb*{x})\cdot{D}\vb*{\phi}(\vb*{x}),
\end{align}
where we define
\begin{equation}
	\vb*{\phi}(\vb*{x})\coloneqq\sum_{m,n\ge 2}|b_m||a_n||\varphi_n(\vb*{x})\vb*{\nabla}\varphi_m(\vb*{x})-\varphi_m(\vb*{x})\vb*{\nabla}\varphi_n(\vb*{x})|.
\end{equation}
Defining $\chi_{ba}\coloneqq\qty[\int\dd{\vb*{x}}\pi(\vb*{x})\vb*{\phi}(\vb*{x})\cdot{D}\vb*{\phi}(\vb*{x})]^{1/2}$, we obtain the following thermodynamic bound on the asymmetry of cross-correlations:
\begin{align}
	\frac{|\delta C_{ba}^\tau|}{\chi_{ba}}&\le\tau e^{-g\tau}\sqrt{\sigma}.\label{eq:ovd.tmp2}
\end{align}

\subsection{Generalization to multiple observables}
The result \eqref{eq:asym.cc.fnt.1} can be generalized to the case of multi-time and multi-observables, where the observables $(\ket{o^1},\dots,\ket{o^M})\eqqcolon\vb*{o}$ are respectively measured at times $(t_1,\dots,t_M)\eqqcolon\vb*{\tau}$.
Here, $M\ge 2$ is an arbitrary integer number and $0=t_1<t_2<\dots<t_M=\tau$.
In this case, the cross-correlation can be defined as 
\begin{equation}
	C_{\vb*{o}}^{\vb*{\tau}}\coloneqq\ev{o^1(t_1)\dots o^M(t_M)},
\end{equation}
where $o^m(t)$ takes the value of $o^m_n$ if the system is in state $n$ at time $t$.
Define the reversed observation times $\tilde{\vb*{\tau}}\coloneqq(\tau-t_1,\dots,\tau-t_M)$.
Then, the following asymmetry of cross-correlations is a quantity of interest:
\begin{equation}
	\delta C_{\vb*{o}}^{\vb*{\tau}}\coloneqq C_{\vb*{o}}^{\vb*{\tau}}-C_{\vb*{o}}^{\tilde{\vb*{\tau}}}.
\end{equation}
We find that this asymmetry is consistently limited by dissipation and decreases exponentially at the rate of the spectral gap:
\begin{equation}
	\frac{|\delta C_{\vb*{o}}^{\vb*{\tau}}|}{\chi_{\vb*{o}}}\le\sigma\Phi\qty(\frac{\sigma}{2\gamma})^{-1}\sum_{k=2}^M(t_{k}-t_{k-1})e^{-g(t_{k}-t_{k-1})}.\label{eq:asym.cc.fnt.mul.obs}
\end{equation}
Here, $\chi_{\vb*{o}}\coloneqq\|o^1\|_*\|o^M\|_*\prod_{m=2}^{M-1}\|o^m\|_\infty\qty(\sum_{n=1}^N\|v_n^r\|_2)^{M-2}$, $\|z\|_2\coloneqq\sqrt{\braket{z}}$ denotes $\ell_2$-norm, and $\|z\|_\infty=\max_n|z_n|$ for any vector $\ket{z}$.
Interestingly, bound \eqref{eq:asym.cc.fnt.mul.obs} indicates that the degree of decay also depends on the time interval between consecutive measurements.

In what follows, we present the proof of Eq.~\eqref{eq:asym.cc.fnt.mul.obs}.
For convenience, we define ${O}_m\coloneqq{\rm diag}(o^m_1,\dots,o^m_N)$.
Then, $C_{\vb*{o}}^{\vb*{\tau}}$ can be explicitly expressed as
\begin{equation}
	C_{\vb*{o}}^{\vb*{\tau}}=\mel{o^M}{\prod_{k=M-1}^{1}e^{{W}(t_{k+1}-t_k)}{O}_k}{\pi}=\mel{o^M}{e^{{W}(t_{M}-t_{M-1})}{O}_{M-1}\dots{O}_2e^{{W}(t_{2}-t_{1})}\Pi}{o^1}.
\end{equation}
Consequently, the asymmetry can be calculated as follows:
\begin{align}
	\delta C_{\vb*{o}}^{\vb*{\tau}}&=\mel{o^M}{e^{{W}(t_{M}-t_{M-1})}{O}_{M-1}\dots{O}_{2}e^{{W}(t_{2}-t_{1})}\Pi}{o^1}-\mel{o^1}{e^{{W}(t_{2}-t_{1})}{O}_{2}\dots{O}_{M-1}e^{{W}(t_{M}-t_{M-1})}\Pi}{o^M}\notag\\
	&=\mel{o^M}{e^{{W}(t_{M}-t_{M-1})}{O}_{M-1}\dots{O}_{2}e^{{W}(t_{2}-t_{1})}\Pi}{o^1}-\mel{o^M}{\Pi e^{{W}^\dagger(t_{M}-t_{M-1})}{O}_{M-1}\dots{O}_{2}e^{{W}^\dagger(t_{2}-t_{1})}}{o^1}\notag\\
	&=\mel{o^M}{e^{{W}(t_{M}-t_{M-1})}{O}_{M-1}\dots{O}_{2}e^{{W}(t_{2}-t_{1})}\Pi-\Pi e^{{W}^\dagger(t_{M}-t_{M-1})}{O}_{M-1}\dots{O}_{2}e^{{W}^\dagger(t_{2}-t_{1})}}{o^1}.
\end{align}
For simplicity, we define $\Delta_m\coloneqq t_{m}-t_{m-1}>0$.
Noticing that $[{O}_k,\Pi]=0$ for all $k$ and $e^{{W} t}=\sum_{i}e^{\lambda_it}\dyad{v_i^r}{v_i^l}$, we can further transform $\delta C_{\vb*{o}}^{\vb*{\tau}}$ as
\begin{align}
	\delta C_{\vb*{o}}^{\vb*{\tau}}&=\mel{o^M}{\sum_{k=2}^{M}e^{{W}\Delta_{M}}{O}_{M-1}\dots e^{{W}\Delta_{k+1}}{O}_{k}\qty[e^{{W}\Delta_{k}}\Pi-\Pi e^{{W}^\dagger\Delta_{k}}]{O}_{k-1}e^{{W}^\dagger\Delta_{k-1}}\dots{O}_{2}e^{{W}^\dagger\Delta_{2}}}{o^1}\notag\\
	&=\sum_{k=2}^{M}\Delta_k\int_0^1\dd{s}\mel{o^M}{\prod_{m=M-1}^{k} e^{{W}\Delta_{m+1}}{O}_{m}\qty[e^{s{W}\Delta_{k}}{J} e^{(1-s){W}^\dagger\Delta_{k}}]\prod_{m=k-1}^{2}{O}_{m}e^{{W}^\dagger\Delta_{m}}}{o^1}\notag\\
	&=\sum_{k=2}^{M}\Delta_k\sum_{i,j}\frac{e^{\lambda_i\Delta_k}-e^{\lambda_j^*\Delta_k}}{(\lambda_i-\lambda_j^*)\Delta_k}\mel{o^M}{\prod_{m=M-1}^{k} e^{{W}\Delta_{m+1}}{O}_{m}}{v_i^r}\mel{v_i^l}{{J}}{v_j^l}\mel{v_j^r}{\prod_{m=k-1}^{2}{O}_{m}e^{{W}^\dagger\Delta_{m}}}{o^1}\notag\\
	&=\sum_{k=2}^{M}\Delta_k\sum_{i,j\ge 2}\frac{e^{\lambda_i\Delta_k}-e^{\lambda_j^*\Delta_k}}{(\lambda_i-\lambda_j^*)\Delta_k}\mel{o^M}{\prod_{m=M-1}^{k} e^{{W}\Delta_{m+1}}{O}_{m}}{v_i^r}\mel{v_i^l}{{J}}{v_j^l}\mel{v_j^r}{\prod_{m=k-1}^{2}{O}_{m}e^{{W}^\dagger\Delta_{m}}}{o^1}.\label{eq:mul.obs.tmp1}
\end{align}
Next, we upper bound the terms in Eq.~\eqref{eq:mul.obs.tmp1}.
Note that for any $i,j\ge 2$, the following inequality holds:
\begin{equation}
	\qty|\frac{e^{\lambda_i\Delta_k}-e^{\lambda_j^*\Delta_k}}{(\lambda_i-\lambda_j^*)\Delta_k}|\le e^{-g\Delta_k}.\label{eq:mul.obs.tmp5}
\end{equation}
In addition, since $\Re{\lambda_n}\le 0$, we can evaluate as follows:
\begin{align}
	|\mel{o^M}{\prod_{m=M-1}^{k} e^{{W}\Delta_{m+1}}{O}_{m}}{v_i^r}|&=\qty|\sum_{1\le i_{M-1},\dots,i_k\le N}e^{\sum_{m=M-1}^k\lambda_{i_m}\Delta_{m+1}}\braket{o^M}{v_{i_{M-1}}^r}\prod_{m=M-1}^{k}\mel{v_{i_m}^l}{{O}_{m}}{v_{i_{m-1}}^r}|\notag\\
	&\le\sum_{1\le i_{M-1},\dots,i_k\le N}\qty|\braket{o^M}{v_{i_{M-1}}^r}\prod_{m=M-1}^{k}\mel{v_{i_m}^l}{{O}_{m}}{v_{i_{m-1}}^r}|\notag\\
	&\le\sum_{1\le i_{M-1},\dots,i_k\le N}|\braket{o^M}{v_{i_{M-1}}^r}|\prod_{m=M-1}^{k}\|{O}_m\|_\infty\|v_{i_m}^l\|_2\|v_{i_{m-1}}^r\|_2\notag\\
	&=\prod_{m=M-1}^{k}\|{O}_m\|_\infty\qty(\sum_{1\le i_{M-1},\dots,i_k\le N}|\tilde{o}_{i_{M-1}}^M|\prod_{m=M-1}^{k}\|v_{i_{m-1}}^r\|_2)\notag\\
	&\le\|o^M\|_*\qty(\sum_{n=1}^N\|v_n^r\|_2)^{M-1-k}\|v_{i}^r\|_2\prod_{m=M-1}^{k}\|{O}_m\|_\infty.\label{eq:mul.obs.tmp2}
\end{align}
where $i_{k-1}\equiv i$.
Likewise, we can also obtain
\begin{align}
	|\mel{v_j^r}{\prod_{m=k-1}^{2}{O}_{m}e^{{W}^\dagger(t_{m}-t_{m-1})}}{o^1}|&\le \|o^1\|_*\qty(\sum_{n=1}^N\|v_n^r\|_2)^{k-3}\|v_{j}^r\|_2\prod_{m=k-1}^{2}\|{O}_m\|_\infty.\label{eq:mul.obs.tmp3}
\end{align}
By combining Eqs.~\eqref{eq:mul.obs.tmp2} and \eqref{eq:mul.obs.tmp3}, we arrive at the following inequality:
\begin{align}
	\sum_{i,j\ge 2}|\mel{o^M}{\prod_{m=M-1}^{k} e^{{W}(t_{m+1}-t_{m})}{O}_{m}}{v_i^r}\mel{v_j^r}{\prod_{m=k-1}^{2}{O}_{m}e^{{W}^\dagger(t_{m}-t_{m-1})}}{o^1}|\le \|o^1\|_*\|o^M\|_*\prod_{m=2}^{M-1}\|{O}_m\|_\infty\qty(\sum_{n=1}^N\|v_n^r\|_2)^{M-2}.\label{eq:mul.obs.tmp4}
\end{align}
Noting that $|\mel{v_i^l}{{J}}{v_j^l}|\le\sum_{m,n}|j_{mn}|\le\sigma\Phi\qty(\sigma/2\gamma)^{-1}$ and using Eqs.~\eqref{eq:mul.obs.tmp5} and \eqref{eq:mul.obs.tmp4}, we obtain
\begin{equation}
	|\delta C_{\vb*{o}}^{\vb*{\tau}}|\le \|o^1\|_*\|o^M\|_*\prod_{m=2}^{M-1}\|{O}_m\|_\infty\qty(\sum_{n=1}^N\|v_n^r\|_2)^{M-2}\sum_{k=2}^M(t_k-t_{k-1})e^{-g(t_k-t_{k-1})}\sigma\Phi\qty(\sigma/2\gamma)^{-1}.
\end{equation}
Defining $\chi_{\vb*{o}}\coloneqq\|o^1\|_*\|o^M\|_*\prod_{m=2}^{M-1}\|{O}_m\|_\infty\qty(\sum_{n=1}^N\|v_n^r\|_2)^{M-2}$, the generalization of Eq.~\eqref{eq:asym.cc.fnt.1} can be attained as
\begin{equation}
	\frac{|\delta C_{\vb*{o}}^{\vb*{\tau}}|}{\chi_{\vb*{o}}}\le\sigma\Phi\qty(\frac{\sigma}{2\gamma})^{-1}\sum_{k=2}^M(t_{k}-t_{k-1})e^{-g(t_{k}-t_{k-1})}.
\end{equation}

\section{Proof of Eq.~\eqref{eq:qua.asym.cc} and an analytical demonstration}

\subsection{Proof of Eq.~\eqref{eq:qua.asym.cc}}\label{app:qua}
For convenience, we define $w_{mn}^k\coloneqq|\mel{m}{L_k}{n}|^2$ and $j_{mn}^k\coloneqq w_{mn}^k\pi_n-w_{nm}^{k'}\pi_m$.
Using these terms, the rates of entropy production and dynamical activity can be expressed as \cite{Vu.2023.PRX}
\begin{align}
	\sigma&=\frac{1}{2}\sum_k\sum_{n,m}j_{mn}^k\ln\frac{w_{mn}^k\pi_n}{w_{nm}^{k'}\pi_m},\\
	\gamma&=\frac{1}{2}\sum_k\sum_{n,m}(w_{mn}^k\pi_n+w_{nm}^{k'}\pi_m).
\end{align}
Furthermore, it was proved that \cite{Vu.2023.PRX}
\begin{equation}
	\sum_{k}\sum_{n,m}|j_{mn}^k|\le \sigma\Phi\qty(\frac{\sigma}{2\gamma})^{-1},
\end{equation}
which will be used later.
By simple algebraic calculations, it can be shown that the relation $\ev{{X},\mca{L}({Y})}=\ev{\tilde{\mca{L}}({X}),{Y}}$ holds for any operators ${X}$ and ${Y}$, where $\ev{X,Y}\coloneqq\tr{X^\dagger Y}$.
As a consequence, we can prove that
\begin{equation}
	\ev{{X},e^{\mca{L}t}({Y})}=\ev{e^{\tilde{\mca{L}}t}({X}),{Y}}
\end{equation}
for any $t\ge 0$.
Noting that ${A}$ and ${B}$ are self-adjoint operators, we can calculate the asymmetry as follows:
\begin{align}
	\delta C_{ba}^\tau&=\ev{{B},e^{\mca{L}\tau}({A}\pi)}-\ev{{A},e^{\mca{L}\tau}({B}\pi)}\notag\\
	&=\ev{{B},e^{\mca{L}\tau}({A}\pi)}-\ev{e^{\tilde{\mca{L}}\tau}({A}),{B}\pi}\notag\\
	&=\ev{{B},e^{\mca{L}\tau}({A}\pi)}-\ev{{B},\pi e^{\tilde{\mca{L}}\tau}({A})}\notag\\
	&=\ev{{B},e^{\mca{L}\tau}({A}\pi)-\pi e^{\tilde{\mca{L}}\tau}({A})}.
\end{align}
Since $[{A},\pi]=0$, the following equality holds:
\begin{align}
	e^{\mca{L}\tau}({A}\pi)-\pi e^{\tilde{\mca{L}}\tau}({A})&=\int_0^1\dd{s}\frac{d}{ds}\qty[ e^{s\mca{L}\tau}(\pi e^{(1-s)\tilde{\mca{L}}\tau}({A})) ]\notag\\
	&=\tau\int_0^1\dd{s}e^{s\mca{L}\tau}\qty[ \mca{L} (\pi e^{(1-s)\tilde{\mca{L}}\tau}({A})) - \pi\tilde{\mca{L}}(e^{(1-s)\tilde{\mca{L}}\tau}({A})) ].
\end{align}
Using this equality, we can proceed further as
\begin{align}
	\delta C_{ba}^\tau&=\tau\int_0^1\dd{s}\ev{{B},e^{s\mca{L}\tau}\qty[ \mca{L} (\pi e^{(1-s)\tilde{\mca{L}}\tau}({A})) - \pi\tilde{\mca{L}}(e^{(1-s)\tilde{\mca{L}}\tau}({A})) ]}\notag\\
	&=\tau\int_0^1\dd{s}\ev{e^{s\tilde{\mca{L}}\tau}({B}),\mca{L} (\pi e^{(1-s)\tilde{\mca{L}}\tau}({A})) - \pi\tilde{\mca{L}}(e^{(1-s)\tilde{\mca{L}}\tau}({A}))}\notag\\
	&=\tau\int_0^1\dd{s}\ev{{B}_{s\tau},\mca{L} (\pi{A}_{(1-s)\tau}) - \pi\tilde{\mca{L}}({A}_{(1-s)\tau})}.
\end{align}
For any operator ${X}$, we can express ${X}$ in terms of $\{{V}_n\}$ as ${X}=\sum_nz_n^x{V}_n$, where $\{z_n^x\}$ are complex coefficients.
Note that $V_1=\mbb{1}$ and ${X}_{t}=\sum_nz_n^xe^{\lambda_nt}{V}_n$.
Define $\bar{{X}}_t\coloneqq{X}_t-z_1^x\mbb{1}=\sum_{n\ge 2}z_n^xe^{\lambda_nt}{V}_n$.
Since $\ev{\mbb{1},\mca{L} (\pi{A}_{t}) - \pi\tilde{\mca{L}}({A}_{t})}=\ev{{B}_t,\mca{L} (\pi\mbb{1}) - \pi\tilde{\mca{L}}(\mbb{1})}=0$, the asymmetry of cross-correlations can be written as
\begin{align}
	\delta C_{ba}^\tau&=\tau\int_0^1\dd{s}\ev{\bar{{B}}_{s\tau},\mca{L} (\pi\bar{{A}}_{(1-s)\tau}) - \pi\tilde{\mca{L}}(\bar{{A}}_{(1-s)\tau})}.\label{eq:qbound.tmp8}
\end{align}
The term inside the integral can be expressed as
\begin{align}
	\ev{\bar{{B}}_{s\tau},\mca{L} (\pi\bar{{A}}_{(1-s)\tau}) - \pi\tilde{\mca{L}}(\bar{{A}}_{(1-s)\tau})}&=\ev{\bar{{B}}_{s\tau},-i[H,\pi\bar{{A}}_{(1-s)\tau}]-i\pi[H,\bar{{A}}_{(1-s)\tau}]}\notag\\
	&+\sum_k\ev{\bar{{B}}_{s\tau},L_k\pi\bar{{A}}_{(1-s)\tau}L_k^\dagger-\pi L_k^\dagger\bar{{A}}_{(1-s)\tau}L_k+(\pi L_k^\dagger L_kA_{(1-s)\tau}-L_k^\dagger L_k\pi\bar{{A}}_{(1-s)\tau})/2}.\label{eq:qbound.tmp1}
\end{align}
We individually evaluate the terms in Eq.~\eqref{eq:qbound.tmp1}.
The first term can be upper bounded as follows:
\begin{align}
	&\qty|\ev{\bar{{B}}_{s\tau},-i[H,\pi\bar{{A}}_{(1-s)\tau}]-i\pi[H,\bar{{A}}_{(1-s)\tau}]}|\notag\\
	&\le \qty|\ev{\bar{{B}}_{s\tau},[H,\pi\bar{{A}}_{(1-s)\tau}]}|+\qty|\ev{\bar{{B}}_{s\tau},\pi[H,\bar{{A}}_{(1-s)\tau}]}|\notag\\
	&=\Big|\sum_{n}\pi_n\mel{n}{\bar{{A}}_{(1-s)\tau}(\bar{{B}}_{s\tau}H-H\bar{{B}}_{s\tau})}{n}\Big|+\Big|\sum_{n}\pi_n\mel{n}{(H\bar{{A}}_{(1-s)\tau}-\bar{{A}}_{(1-s)\tau}H)\bar{{B}}_{s\tau}}{n}\Big|.\label{eq:qbound.tmp4}
\end{align}
Moreover, the first term in Eq.~\eqref{eq:qbound.tmp4} can be bounded as follows:
\begin{align}
	&\Big|\sum_{n}\pi_n\mel{n}{\bar{{A}}_{(1-s)\tau}(\bar{{B}}_{s\tau}H-H\bar{{B}}_{s\tau})}{n}\Big|\notag\\
	&\le \Big|\sum_{n,m(\neq n)}\pi_n\mel{n}{\bar{{A}}_{(1-s)\tau}}{m}\mel{m}{\bar{{B}}_{s\tau}H-H\bar{{B}}_{s\tau}}{n}\Big|+\Big|\sum_{n}\pi_n\mel{n}{\bar{{A}}_{(1-s)\tau}}{n}\mel{n}{\bar{{B}}_{s\tau}H-H\bar{{B}}_{s\tau}}{n}\Big|\notag\\
	&\le\|\bar{{B}}_{s\tau}H-H\bar{{B}}_{s\tau}\|_\infty\sum_{n,m(\neq n)}\pi_n|\mel{n}{\bar{{A}}_{(1-s)\tau}}{m}|+\Big|\sum_{n,m(\neq n)}\pi_n\mel{n}{\bar{{A}}_{(1-s)\tau}}{n}\qty(\mel{n}{\bar{{B}}_{s\tau}}{m}\mel{m}{H}{n}-\mel{n}{H}{m}\mel{m}{\bar{{B}}_{s\tau}}{n})\Big|\notag\\
	&\le 2\|\bar{{B}}_{s\tau}\|_\infty\|H\|_\infty\sum_{n\neq m}|\mel{n}{\bar{{A}}_{(1-s)\tau}}{m}|+2\|\bar{{A}}_{(1-s)\tau}\|_\infty\|H\|_\infty\sum_{n\neq m}|\mel{n}{\bar{{B}}_{s\tau}}{m}|\notag\\
	&\le 2e^{-g\tau}\|H\|_\infty\qty[\|B\|_*C_{\ell_1}(A_{(1-s)\tau}) + \|A\|_*C_{\ell_1}(B_{s\tau})].\label{eq:qbound.tmp3}
\end{align}
Here, we use the facts that
\begin{align}
	\sum_{n\neq m}|\mel{n}{\bar{X}_{t}}{m}|&=\sum_{n\neq m}|\mel{n}{X_t-z_1^x\mbb{1}}{m}|=\sum_{n\neq m}|\mel{n}{X_t}{m}|=e^{-gt}C_{\ell_1}(X_t),\\
	\|\bar{X}_{t}\|_\infty&=\|\sum_{n\ge 2}z_n^xe^{\lambda_nt}{V}_n\|_\infty\le \sum_{n\ge 2}|z_n^x||e^{\lambda_nt}|\|{V}_n\|_\infty\le e^{-gt}\sum_{n\ge 2}|z_n^x|=e^{-gt}\|X\|_*.
\end{align}
It is worth noting that $C_{\ell_1}(X_t)$ is upper bounded by a constant for all times:
\begin{align}
	C_{\ell_1}(X_t)&=\sum_{n\neq m}|\mel{n}{\sum_{k\ge 2}z_k^xe^{(\lambda_k+g)t}V_k}{m}|\le \sum_{n\neq m}\sum_{k\ge 2}|z_k^x|\|V_k\|_\infty=N(N-1)\|X\|_*.
\end{align}
Therefore, the last quantity in Eq.~\eqref{eq:qbound.tmp3} always decays exponentially at the rate of the spectral gap $g$.
Likewise, we also obtain the following bound for the second term in Eq.~\eqref{eq:qbound.tmp4}:
\begin{align}
	\Big|\sum_{n}\pi_n\mel{n}{(H\bar{{A}}_{(1-s)\tau}-\bar{{A}}_{(1-s)\tau}H)\bar{{B}}_{s\tau}}{n}\Big|\le 2e^{-g\tau}\|H\|_\infty\qty[\|B\|_*C_{\ell_1}(A_{(1-s)\tau}) + \|A\|_*C_{\ell_1}(B_{s\tau})].\label{eq:qbound.tmp5}
\end{align}
Consequently, by combining Eqs.~\eqref{eq:qbound.tmp4}, \eqref{eq:qbound.tmp3}, and \eqref{eq:qbound.tmp5}, we arrive at the following upper bound of the first term in Eq.~\eqref{eq:qbound.tmp1}:
\begin{equation}
	\qty|\ev{\bar{{B}}_{s\tau},-i[H,\pi\bar{{A}}_{(1-s)\tau}]-i\pi[H,\bar{{A}}_{(1-s)\tau}]}|\le 4e^{-g\tau}\|H\|_\infty\qty[\|B\|_*C_{\ell_1}(A_{(1-s)\tau}) + \|A\|_*C_{\ell_1}(B_{s\tau})].\label{eq:qbound.tmp7}
\end{equation}
Now, it remains to evaluate the second term in Eq.~\eqref{eq:qbound.tmp1}, which can be calculated as follows:
\begin{align}
	&\sum_k\ev{\bar{{B}}_{s\tau},L_k\pi\bar{{A}}_{(1-s)\tau}L_k^\dagger-\pi L_k^\dagger\bar{{A}}_{(1-s)\tau}L_k+(\pi L_k^\dagger L_kA_{(1-s)\tau}-L_k^\dagger L_k\pi\bar{{A}}_{(1-s)\tau})/2}\notag\\
	&=\sum_k\sum_{n}\pi_n\mel{n}{\bar{{A}}_{(1-s)\tau}L_k^\dagger\bar{{B}}_{s\tau}L_k-L_k^\dagger\bar{{A}}_{(1-s)\tau}L_k\bar{{B}}_{s\tau}+(L_k^\dagger L_k\bar{{A}}_{(1-s)\tau}\bar{{B}}_{s\tau}-\bar{{A}}_{(1-s)\tau}\bar{{B}}_{s\tau}L_k^\dagger L_k)/2}{n}\notag\\
	&=\sum_k\sum_{n,m,n',m'}\pi_n\qty(\mel{n}{\bar{{A}}_{(1-s)\tau}}{n'}\mel{n'}{L_k^\dagger}{m'}\mel{m'}{\bar{{B}}_{s\tau}}{m}\mel{m}{L_k}{n}-\mel{n}{L_k^\dagger}{m}\mel{m}{\bar{{A}}_{(1-s)\tau}}{m'}\mel{m'}{L_k}{n'}\mel{n'}{\bar{{B}}_{s\tau}}{n})\notag\\
	&+\sum_k\sum_{n,m,n'}\pi_n\qty(\mel{n}{L_k^\dagger}{m}\mel{m}{L_k}{n'}\mel{n'}{\bar{{A}}_{(1-s)\tau}\bar{{B}}_{s\tau}}{n}-\mel{n}{\bar{{A}}_{(1-s)\tau}\bar{{B}}_{s\tau}}{n'}\mel{n'}{L_k^\dagger}{m}\mel{m}{L_k}{n})/2.
\end{align}
Collecting the terms that contain only the diagonal elements of $\bar{{A}}_{(1-s)\tau}$, $\bar{{B}}_{s\tau}$, and $\bar{{A}}_{(1-s)\tau}\bar{{B}}_{s\tau}$, we can upper bound them as
\begin{align}
	&\Big|\sum_k\sum_{n,m}\pi_n\qty(\mel{n}{\bar{{A}}_{(1-s)\tau}}{n}\mel{n}{L_k^\dagger}{m}\mel{m}{\bar{{B}}_{s\tau}}{m}\mel{m}{L_k}{n}-\mel{n}{L_k^\dagger}{m}\mel{m}{\bar{{A}}_{(1-s)\tau}}{m}\mel{m}{L_k}{n}\mel{n}{\bar{{B}}_{s\tau}}{n})\notag\\
	&+\sum_k\sum_{n,m}\pi_n\qty(\mel{n}{L_k^\dagger}{m}\mel{m}{L_k}{n}\mel{n}{\bar{{A}}_{(1-s)\tau}\bar{{B}}_{s\tau}}{n}-\mel{n}{\bar{{A}}_{(1-s)\tau}\bar{{B}}_{s\tau}}{n}\mel{n}{L_k^\dagger}{m}\mel{m}{L_k}{n})/2\Big|\notag\\
	&=\Big|\sum_k\sum_{n,m}\pi_n\qty(\mel{n}{\bar{{A}}_{(1-s)\tau}}{n}\mel{m}{\bar{{B}}_{s\tau}}{m}w_{mn}^k-\mel{m}{\bar{{A}}_{(1-s)\tau}}{m}\mel{n}{\bar{{B}}_{s\tau}}{n}w_{mn}^k)\Big|\notag\\
	&=\Big|\sum_k\sum_{n,m}\mel{n}{\bar{{A}}_{(1-s)\tau}}{n}\mel{m}{\bar{{B}}_{s\tau}}{m}\qty(w_{mn}^k\pi_n-w_{nm}^{k'}\pi_m)\Big|\notag\\
	&\le\max_{m,n}|\mel{n}{\bar{{A}}_{(1-s)\tau}}{n}\mel{m}{\bar{{B}}_{s\tau}}{m}|\sum_k\sum_{n,m}|w_{mn}^k\pi_n-w_{nm}^{k'}\pi_m|\notag\\
	&\le \|\bar{{A}}_{(1-s)\tau}\|_\infty\|\bar{{B}}_{s\tau}\|_\infty\sum_k\sum_{m,n}|j_{mn}^k|\notag\\
	&\le e^{-g\tau}\|A\|_*\|B\|_*\sigma\Phi\qty(\sigma/2\gamma)^{-1}.\label{eq:qbound.tmp2}
\end{align}
For the terms that involve the non-diagonal elements of $\bar{{A}}_{(1-s)\tau}$, $\bar{{B}}_{s\tau}$, and $\bar{{A}}_{(1-s)\tau}\bar{{B}}_{s\tau}$, we can evaluate them as follows:
\begin{align}
	\Big|\sum_{k,n,m(\neq n)}\pi_n\mel{n}{\bar{{A}}_{(1-s)\tau}}{m}\mel{m}{L_k^\dagger\bar{{B}}_{s\tau}L_k}{n}\Big|&\le \|\bar{B}_{s\tau}\|_\infty\sum_k\|L_k\|_\infty^2\sum_{n\neq m}\pi_n|\mel{n}{\bar{{A}}_{(1-s)\tau}}{m}|\notag\\
	&\le e^{-g\tau}\|B\|_*\sum_k\|L_k\|_\infty^2C_{\ell_1}(A_{(1-s)\tau}),\\
	\Big|\sum_{k,n,m,m'(\neq m)}\pi_n\mel{n}{\bar{{A}}_{(1-s)\tau}}{n}\mel{n}{L_k^\dagger}{m}\mel{m}{\bar{{B}}_{s\tau}}{m'}\mel{m'}{L_k}{n}\Big|&\le \|\bar{A}_{(1-s)\tau}\|_\infty\sum_k\|L_k\|_\infty^2\sum_{n\neq m}|\mel{n}{\bar{{B}}_{s\tau}}{m}|\notag\\
	&\le e^{-g\tau}\|A\|_*\sum_k\|L_k\|_\infty^2C_{\ell_1}(B_{s\tau}),\\
	\Big|\sum_{k,n,m(\neq n)}\pi_n\mel{n}{L_k^\dagger\bar{{A}}_{(1-s)\tau}L_k}{m}\mel{m}{\bar{{B}}_{s\tau}}{n}\Big|&\le \|\bar{A}_{(1-s)\tau}\|_\infty\sum_k\|L_k\|_\infty^2\sum_{n\neq m}\pi_n|\mel{n}{\bar{{B}}_{s\tau}}{m}|\notag\\
	&\le e^{-g\tau}\|A\|_*\sum_k\|L_k\|_\infty^2C_{\ell_1}(B_{s\tau}),\\
	\Big|\sum_{k,n,m,m'(\neq m)}\pi_n\mel{n}{L_k^\dagger}{m'}\mel{m'}{\bar{{A}}_{(1-s)\tau}}{m}\mel{m}{L_k}{n}\mel{n}{\bar{{B}}_{s\tau}}{n}\Big|&\le \|\bar{B}_{s\tau}\|_\infty\sum_k\|L_k\|_\infty^2\sum_{n\neq m}|\mel{n}{\bar{{A}}_{(1-s)\tau}}{m}|\notag\\
	&\le e^{-g\tau}\|B\|_*\sum_k\|L_k\|_\infty^2C_{\ell_1}(A_{(1-s)\tau}),
\end{align}
and
\begin{align}
	&\Big|\sum_{k,n,m(\neq n)}\pi_n\mel{n}{\bar{{A}}_{(1-s)\tau}\bar{{B}}_{s\tau}}{m}\mel{m}{L_k^\dagger L_k}{n}\Big|\notag\\
	&\le \sum_{k}\|L_k\|_\infty^2\sum_{n\neq m}\pi_n|\mel{n}{\bar{{A}}_{(1-s)\tau}\bar{{B}}_{s\tau}}{m}|\notag\\
	&\le \sum_{k}\|L_k\|_\infty^2\sum_{n\neq m}\Big[\pi_n|\mel{n}{\bar{{A}}_{(1-s)\tau}}{m}\mel{m}{\bar{{B}}_{s\tau}}{m}|+\sum_{m'(\neq m)}\pi_n|\mel{n}{\bar{{A}}_{(1-s)\tau}}{m'}\mel{m'}{\bar{{B}}_{s\tau}}{m}|\Big]\notag\\
	&\le e^{-g\tau}\sum_{k}\|L_k\|_\infty^2\qty[\|B\|_*C_{\ell_1}(A_{(1-s)\tau})+\|A\|_*C_{\ell_1}(B_{s\tau})].
\end{align}
By combining all these inequalities, the following upper bound for the second term in Eq.~\eqref{eq:qbound.tmp1} is immediately derived:
\begin{align}
	&\Big|\sum_k\ev{\bar{{B}}_{s\tau},L_k\pi\bar{{A}}_{(1-s)\tau}L_k^\dagger-\pi L_k^\dagger\bar{{A}}_{(1-s)\tau}L_k+(\pi L_k^\dagger L_kA_{(1-s)\tau}-L_k^\dagger L_k\pi\bar{{A}}_{(1-s)\tau})/2}\Big|\notag\\
	&\le 3e^{-g\tau}\sum_k\|L_k\|_\infty^2\qty[\|B\|_*C_{\ell_1}(A_{(1-s)\tau})+\|A\|_*C_{\ell_1}(B_{s\tau})]+e^{-g\tau}\|A\|_*\|B\|_*\sigma\Phi\qty(\sigma/2\gamma)^{-1}.\label{eq:qbound.tmp6}
\end{align}
Finally, by inserting Eqs.~\eqref{eq:qbound.tmp7} and \eqref{eq:qbound.tmp6} to Eq.~\eqref{eq:qbound.tmp8}, we obtain the following bound on the asymmetry of cross-correlations:
\begin{equation}
	|\delta C_{ba}^\tau|\le \tau e^{-g\tau}\qty[\mca{C} + \|A\|_*\|B\|_*\sigma\Phi\qty(\frac{\sigma}{2\gamma})^{-1}],
\end{equation}
where $\mca{C}$ is the quantum coherence term given by
\begin{equation}
	\mca{C}\coloneqq (4\|H\|_\infty+3\sum_k\|L_k\|_\infty^2)\int_0^1\dd{s}\qty[ \|B\|_*{C}_{\ell_1}(A_{(1-s)\tau}) + \|A\|_*{C}_{\ell_1}(B_{s\tau}) ].
\end{equation}

\subsection{Analytical demonstration of the critical role of quantum coherence}\label{app:qua.demon}
Here we show that quantum coherence plays a pivotal role in constraining the asymmetry of cross-correlations.
Specifically, we present a case wherein the asymmetry of cross-correlations can persist even when irreversible entropy production is zero.
This implies that the quantum coherence term $\mca{C}$ is inevitable in the derived bound \eqref{eq:qua.asym.cc}.

We consider a three-level maser---the prototype for quantum heat engines that rely on quantum coherence to perform work \cite{Scovil.1959.PRL,Kalaee.2021.PRE,Vu.2022.PRL.TUR}.
The engine is simultaneously coupled to a hot and a cold heat bath and interacts with a classical electric field.
The Markovian dynamics is governed by the local master equation with the Hamiltonian $H_t=H_0+V_t$ and jump operators $L_{1}=\sqrt{\alpha_h(N_h+1)}\sigma_{13}$, $L_{1'}= \sqrt{\alpha_hN_h}\sigma_{31}$, $L_{2}=\sqrt{\alpha_c(N_c+1)}\sigma_{23}$, and $L_{2'}=\sqrt{\alpha_cN_c}\sigma_{32}$.
Here, $H_0=\omega_1\sigma_{11}+\omega_2\sigma_{22}+\omega_3\sigma_{33}$ is the bare Hamiltonian, $V_t=\Omega\qty( e^{i\omega_0t}\sigma_{12}+ e^{-i\omega_0t}\sigma_{21}) $ is the external classical field, $\sigma_{ij}=\dyad{\epsilon_i}{\epsilon_j}$, and $\alpha_x$ and $N_x$ are the decay rate and the thermal occupation number for $x\in\{h,c\}$, respectively.
To remove the time dependence of the full Hamiltonian, it is convenient to rewrite operators in the rotating frame $X\to U_t^\dagger XU_t$, where $U_t=e^{-i\bar{H}t}$ and $\bar{H}=\omega_1\sigma_{11}+(\omega_1+\omega_0)\sigma_{22}+\omega_3\sigma_{33}$.
In this rotating frame, the master equation reads
\begin{equation}\label{eq:rot.frame.Ham}
\dot{{\varrho}}_t = -i[ H,{\varrho}_t ] +\sum_{k=1}^2\qty(\mca{D}[L_{k}]{\varrho}_t+ \mca{D}[L_{k'}]{\varrho}_t),
\end{equation}
where $H=-\Delta \sigma_{22}+\Omega( \sigma_{12}+\sigma_{21} )$ and $\Delta=\omega_0+\omega_1-\omega_2$.
It was shown that the master equation \eqref{eq:rot.frame.Ham} is valid when the driving field is weak \cite{Geva.1996.JCP}.

After some algebraic calculations, we can show that the steady-state density matrix reads
\begin{equation}
	\pi=\pi_{11}\dyad{\epsilon_1}+\pi_{22}\dyad{\epsilon_2}+\pi_{12}\dyad{\epsilon_1}{\epsilon_2}+\pi_{12}^*\dyad{\epsilon_2}{\epsilon_1}+(1-\pi_{11}-\pi_{22})\dyad{\epsilon_3},
\end{equation}
where
\begin{align}
	\pi_{11}&=\mca{F}^{-1}\qty{\alpha_c\alpha_hN_c(N_h+1)[4\Delta^2+(\alpha_cN_c+\alpha_hN_h)^2]+4\Omega^2(\alpha_cN_c+\alpha_hN_h)(\alpha_c+\alpha_h+\alpha_cN_c+\alpha_hN_h)},\\
	\pi_{22}&=\mca{F}^{-1}\qty{\alpha_c\alpha_hN_h(N_c+1)[4\Delta^2+(\alpha_cN_c+\alpha_hN_h)^2]+4\Omega^2(\alpha_cN_c+\alpha_hN_h)(\alpha_c+\alpha_h+\alpha_cN_c+\alpha_hN_h)},\\
	\pi_{12}&=\mca{F}^{-1}\qty{2i\alpha_c\alpha_h\Omega(N_c-N_h)(-2i\Delta +\alpha_cN_c+\alpha_hN_h)},\\
	\mca{F}&=\alpha_c\alpha_h(3N_cN_h+N_c+N_h)[4\Delta^2+(\alpha_cN_c+\alpha_hN_h)^2]+4\Omega^2(\alpha_cN_c+\alpha_hN_h)[\alpha_c(3N_c+2)+\alpha_h(3 N_h+2)].
\end{align}
Likewise, the irreversible entropy production rate is given by
\begin{equation}
	\sigma=4(N_c-N_h)\alpha_c\alpha_h(\alpha_c N_c+\alpha_h N_h)\Omega^2\ln\left(\frac{N_c N_h+N_c}{N_c N_h+N_h}\right).
\end{equation}
Now we consider the simple case of $N_h=N_c$.
In this case, we have $\sigma=0$, $\pi_{11}=\pi_{22}$, and $\pi_{12}=0$, which yield
\begin{equation}
	\pi=\pi_{11}(\dyad{\epsilon_1}+\dyad{\epsilon_2})+(1-2\pi_{11})\dyad{\epsilon_3}.
\end{equation}
For any real numbers $\theta$ and $\phi$, the density matrix $\pi$ can also be written as
\begin{equation}
	\pi=\pi_{11}(\dyad{1}+\dyad{2})+(1-2\pi_{11})\dyad{3},
\end{equation}
where $\ket{1}=e^{i\phi}\cos(\theta)\ket{\epsilon_1}+\sin(\theta)\ket{\epsilon_2}$, $\ket{2}=-\sin(\theta)\ket{\epsilon_1}+e^{-i\phi}\cos(\theta)\ket{\epsilon_2}$, and $\ket{3}=\ket{\epsilon_3}$.
We consider observables $A=\dyad{1}$ and $B=\dyad{3}$.
We need only show that the asymmetry of cross-correlations is nonzero for this measurement basis.
It is thus sufficient to prove for the short-time regime $\tau\ll 1$.
In this region of operational time $\tau$, the asymmetry of cross-correlations can be analytically expanded in terms of $\tau$ as
\begin{align}
	\delta C_{ba}^\tau&=\tr{B[\mbb{1}+\tau\mca{L}+\tau^2\mca{L}^2/2](A\pi)-B\pi[\mbb{1}+\tau\tilde{\mca{L}}+\tau^2\tilde{\mca{L}}^2/2](A)}+O(\tau^3)\notag\\
	&=\tr{B[\mca{L}(A\pi)-\pi\tilde{\mca{L}}(A)]}\tau+\tr{B[\mca{L}^2(A\pi)-\pi\tilde{\mca{L}}^2(A)]}\frac{\tau^2}{2}+O(\tau^3)\notag\\
	&=\frac{N_h(N_h+1)\Omega(\alpha_c-\alpha_h)\sin(2\theta)\sin(\phi)}{3N_h+2}\tau^2+O(\tau^3).
\end{align}
As can be observed, although the first-order term is zero, the second-order term does not vanish and $\delta C_{ba}^\tau$ is thus nonzero (i.e., $|\delta C_{ba}^\tau|>0$).
Since $\sigma=0$, the quantum coherence term $\mca{C}$ is the only term that bounds $\delta C_{ba}^\tau$ in this case:
\begin{equation}
	|\delta C_{ba}^\tau|\le \tau e^{-g\tau}\mca{C}.
\end{equation}

\section{Proofs of Eqs.~\eqref{eq:asym.cc.fnt.2} and \eqref{eq:asym.cc.fnt.aff}}

\subsection{Proof of Eq.~\eqref{eq:asym.cc.fnt.2}}\label{app:main2.proof}
Define ${T}\coloneqq e^{{W}\tau}\Pi-\Pi$, which satisfies ${T}\ket{1}=\ket{0}$ and ${T}^\dagger\ket{1}=\ket{0}$.
In other words, $T_{nn}=-\sum_{m(\neq n)}T_{mn}=-\sum_{m(\neq n)}T_{nm}=-\sum_{m(\neq n)}(T_{mn}+T_{nm})/2$.
For $m\neq n$, we have $T_{mn}=[e^{W\tau}]_{mn}\pi_n\ge 0$, which is nothing but the joint probability of observing the initial and final states.
First, we prove that
\begin{equation}\label{eq:cg.ent.prod}
	\tau\sigma\ge D({T}||{T}^\dagger)=\sum_{m,n}T_{mn}\ln\frac{T_{mn}}{T_{nm}}.
\end{equation}
To this end, let $\Gamma$ be a stochastic trajectory of system states and $p(\Gamma)$ be the path probability of finding $\Gamma$.
Utilizing the phase-space representation of the total entropy production and the monotonicity of the Kullback-Leibler divergence under coarse-graining, Eq.~\eqref{eq:cg.ent.prod} can be proved as
\begin{align}
	\tau\sigma=D(p(\Gamma)||p(\tilde{\Gamma}))&\ge D(\Lambda[p(\Gamma)]||\Lambda[p(\tilde{\Gamma})])\notag\\
	&=D({T}||{T}^\dagger),
\end{align}
where $\Lambda$ is a coarse-grained map that reduces the path probability to the probability of observing only the initial and final states.
For convenience, we define $\ell_{mn}\coloneqq\sqrt{(a_m-a_n)^2+(b_m-b_n)^2}$, $\Omega_{mn}\coloneqq(a_nb_m-a_mb_n)/2$, and $J_{mn}\coloneqq T_{mn}-T_{nm}$.
Note that $J_{mn}$ differs from $j_{mn}$.
Using these quantities, we can express the asymmetry of cross-correlations as
\begin{align}
    \delta C_{ba}^\tau=\mel{b}{{T}-{T}^\dagger}{a}&=\sum_{m>n}(T_{mn}-T_{nm})(a_nb_m-a_mb_n)\notag\\
    &=2\sum_{m>n}J_{mn}\Omega_{mn}.\label{eq:asym.cc.tmp6}
\end{align}
Likewise, we can calculate the decay of auto-correlation as
\begin{align}
	D_a^\tau=-\mel{a}{{T}}{a}&=\sum_{m\neq n}\qty(\frac{T_{mn}+T_{nm}}{2}a_n^2-T_{mn}a_ma_n)\notag\\
	&=\frac{1}{2}\sum_{m>n}(T_{mn}+T_{nm})(a_m-a_n)^2,
\end{align}
which leads to
\begin{align}
	D_a^\tau+D_b^\tau&=\frac{1}{2}\sum_{m>n}(T_{mn}+T_{nm})\ell_{mn}^2.\label{eq:asym.cc.tmp7}
\end{align}
Using Eqs.~\eqref{eq:asym.cc.tmp6} and \eqref{eq:asym.cc.tmp7}, we can prove the first argument of Eq.~\eqref{eq:asym.cc.fnt.2} as follows:
\begin{align}
	\frac{|\delta C_{ba}^\tau|^2}{D_a^\tau+D_b^\tau}&=\frac{8\qty(\sum_{m>n}J_{mn}\Omega_{mn})^2}{\sum_{m>n}(T_{mn}+T_{nm})\ell_{mn}^2}\notag\\
	&\le \frac{2\|a^2+b^2\|_\infty\qty(\sum_{m>n}|J_{mn}|\ell_{mn})^2}{\sum_{m>n}(T_{mn}+T_{nm})\ell_{mn}^2}\notag\\
	&\le 2\|a^2+b^2\|_\infty\sum_{m>n}\frac{\qty(T_{mn}-T_{nm})^2}{T_{mn}+T_{nm}}\notag\\
	&\le \|a^2+b^2\|_\infty\sum_{m>n}(T_{mn}-T_{nm})\ln\frac{T_{mn}}{T_{nm}}\notag\\
	&=\|a^2+b^2\|_\infty D({T}||{T}^\dagger)\notag\\
	&\le \|a^2+b^2\|_\infty\tau\sigma.\label{eq:asym.cc.tmp8}
\end{align}
Here, we use the triangle inequality and $|\Omega_{mn}|\le \ell_{mn}\|a^2+b^2\|_\infty^{1/2}/2$ \cite{Shiraishi.2023.PRE} in the second line, the Cauchy-Schwarz inequality in the third line, inequality $2(x-y)^2/(x+y)\le (x-y)\ln(x/y)$ in the fourth line, and Eq.~\eqref{eq:cg.ent.prod} in the last line.

The second argument of Eq.~\eqref{eq:asym.cc.fnt.2} can be proved by a similar strategy as in Ref.~\cite{Shiraishi.2023.PRE}.
For any directed edge $e=(m\leftarrow n)$, we define $x_e\coloneqq x_{mn}$ for an arbitrary variable $x$, its reversed edge $\tilde{e}\coloneqq(n\leftarrow m)$, and $\mca{E}\coloneqq\{e\,|\,J_e>0\}$.
The discrete isoperimetric inequality \cite{Ohga.2023.PRL,Fan.1955.JWAS} implies that
\begin{equation}
	4|c|\tan\frac{\pi}{|c|}|\Omega_c|\le \ell_c^2,\label{eq:iso.ine}
\end{equation}
where we define $X_c\coloneqq\sum_{e\in c}X_e$ for variable $X$ and cycle $c\in\mca{C}$.
Since $\sum_{m}J_{mn}=0$, we can always find a uniform decomposition of cycles $\mca{C}$ with appropriate orientations and associated positive currents $\{J^c\}_{c\in\mca{C}}$ such that $J_e=\sum_cJ^cS_{e}^c$ for any $e\in\mca{E}$, where $S_{e}^c=1$ if $e\in c$ and zero otherwise \cite{Pietzonka.2016.JPA}.
Using this decomposition, equality $\sum_{c\in\mca{C}}J^cX_c=\sum_{e\in\mca{E}}J_eX_e$ can be derived; thus, $\delta C_{ba}^\tau=2\sum_{e\in\mca{E}}J_e\Omega_e=2\sum_{c\in\mca{C}}J^c\Omega_c$.
By utilizing this equality, inequality \eqref{eq:iso.ine}, and the monotonicity of function $x\tan(\pi/x)$ over $[3,\infty)$, the asymmetry of cross-correlations can be upper bounded as
\begin{align}
	|\delta C_{ba}^\tau|&\le\sum_{c\in\mca{C}}J^c\ell_c^2\qty(2|c|\tan\frac{\pi}{|c|})^{-1}\label{eq:asym.cc.tmp9}\\
	&\le\qty(2N\tan\frac{\pi}{N})^{-1}\max_c\ell_c\sum_{m>n}|J_{mn}|\ell_{mn}.
\end{align}
Subsequently, following the same procedure as in Eq.~\eqref{eq:asym.cc.tmp8} leads to the desired result.

\subsection{Proof of Eq.~\eqref{eq:asym.cc.fnt.aff}}\label{app:main3.proof}
We follow the approach in Ref.~\cite{Ohga.2023.PRL}.
Note that observables $a$ and $b$ can be arbitrarily rescaled without altering the ratio $|\delta C_{ba}^\tau|/\sqrt{D_a^\tau D_b^\tau}$.
Therefore, we can assume $D_a^\tau=D_b^\tau$ without loss of generality.
Noticing that $\mca{F}_c^\tau=\sum_{e\in c}\ln(T_e/T_{\tilde{e}})$ and applying Jensen's inequality, we can lower bound $\mca{F}_c^\tau$ as
\begin{align}
	\mca{F}_c^\tau&=\sum_{e\in c}\ln\frac{(T_e+T_{\tilde{e}})+J_e}{(T_e+T_{\tilde{e}})-J_e}\notag\\
	&=2|c|\sum_{e\in c}\frac{1}{|c|}\atanh\qty(\frac{J_e}{T_e+T_{\tilde{e}}})\notag\\
	&\ge 2|c|\atanh\qty(\frac{1}{|c|}\sum_{e\in c}\frac{J_e}{T_e+T_{\tilde{e}}}),
\end{align}
which yields
\begin{equation}
	\frac{1}{|c|}\sum_{e\in c}\frac{J_e}{T_e+T_{\tilde{e}}}\le\tanh\qty(\frac{\mca{F}_c^\tau}{2|c|}).
\end{equation}
Consequently, by applying the Cauchy-Schwarz inequality, we obtain
\begin{align}
	\sum_{e\in c}\frac{(T_e+T_{\tilde{e}})}{J_e}\ell_e^2&\ge\frac{\qty(\sum_{e\in c}\ell_e)^2}{\sum_{e\in c}J_e/(T_e+T_{\tilde{e}})} \notag\\
	&\ge\frac{\ell_c^2}{|c|\tanh(\mca{F}_c^\tau/2|c|)}.
\end{align}
Using this inequality, we can lower bound the denominator as 
\begin{align}
	2\sqrt{D_a^\tau D_b^\tau}&=D_a^\tau+D_b^\tau\ge\frac{1}{2}\sum_{e\in\mca{E}}(T_e+T_{\tilde{e}})\ell_e^2\notag\\
	&=\frac{1}{2}\sum_{c\in\mca{C}}J^c\sum_{e\in c}\frac{(T_e+T_{\tilde{e}})}{J_e}\ell_e^2\notag\\
	&\ge \frac{1}{2}\sum_{c\in\mca{C}}J^c\ell_c^2\qty[|c|\tanh(\mca{F}_c^\tau/2|c|)]^{-1}.\label{eq:asym.cc.tmp10}
\end{align}
By combining Eqs.~\eqref{eq:asym.cc.tmp9} and \eqref{eq:asym.cc.tmp10} and noticing that $(\sum_cx_c)/(\sum_cy_c)\le\max_c(x_c/y_c)$ for positive numbers $\{x_c\}$ and $\{y_c\}$, Eq.~\eqref{eq:asym.cc.fnt.aff} is immediately derived.

\subsubsection{Analytical demonstration of the bound's attainability}
Here we analytically demonstrate that the equality of the bound \eqref{eq:asym.cc.fnt.aff} can be attained for arbitrary times in the three-state biochemical oscillation with homogeneous transition rates.
Without loss of generality, we can assume that $w_+=\omega>1$ and $w_-=1$.
Note that the observables are given by $\ket{a}=[\sin(2\pi n/3)]_n^\top$ and $\ket{b}=[\cos(2\pi n/3)]_n^\top$.
For this unicyclic system, by performing simple algebraic calculations, we can derive that
\begin{align}
	\frac{|\delta C_{ba}^\tau|}{2\sqrt{D_a^\tau D_b^\tau}}&=\frac{\qty|\sin[\sqrt{3}(\omega-1)\tau/2]|}{e^{3(\omega+1)\tau/2}-\cos[\sqrt{3}(\omega-1)\tau/2]},\\
	\max_c\frac{\tanh(\mca{F}_c^\tau/2|c|)}{\tan(\pi/|c|)}&=\frac{1}{\sqrt{3}}\tanh\qty(\frac{1}{2}\qty|\log\frac{e^{3(\omega+1)\tau/2}-\cos[\sqrt{3}(\omega-1)\tau/2]+\sqrt{3}\sin[\sqrt{3}(\omega-1)\tau/2]}{e^{3(\omega+1)\tau/2}-\cos[\sqrt{3}(\omega-1)\tau/2]-\sqrt{3}\sin[\sqrt{3}(\omega-1)\tau/2]}|).
\end{align}
By separately considering two cases: $\sin[\sqrt{3}(\omega-1)\tau/2]\ge 0$ and $\sin[\sqrt{3}(\omega-1)\tau/2]<0$, one can easily verify the equality of the bound: 
\begin{equation}
	\frac{|\delta C_{ba}^\tau|}{2\sqrt{D_a^\tau D_b^\tau}}=\max_c\frac{\tanh(\mca{F}_c^\tau/2|c|)}{\tan(\pi/|c|)}.
\end{equation}

\section{Useful inequalities}\label{app:useful.ines}

\begin{proposition}\label{prop:tr.prod.ine}
For any matrices ${X}$ and ${Y}$ and orthogonal basis $\{\ket{n}\}$, we have
\begin{equation}
    \qty|\tr{{X}{Y}}|\le\|{Y}\|_\infty\sum_{m,n}|\mel{m}{{X}}{n}|,
\end{equation}
where $\|{Y}\|_\infty$ denotes the operator norm of ${Y}$.
\end{proposition}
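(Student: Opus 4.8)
The plan is to expand the trace directly in the given basis and then bound the matrix elements of $Y$ one at a time by the operator norm. First I would write $\tr{XY}=\sum_m\mel{m}{XY}{m}$ and insert the resolution of identity $\sum_n\dyad{n}=\mbb{1}$ (valid since $\{\ket{n}\}$ is orthonormal) to obtain the elementwise expansion $\tr{XY}=\sum_{m,n}\mel{m}{X}{n}\mel{n}{Y}{m}$. Applying the triangle inequality then gives immediately $|\tr{XY}|\le\sum_{m,n}|\mel{m}{X}{n}|\,|\mel{n}{Y}{m}|$, which already isolates the two ingredients I want to treat separately: the $\ell_1$-type sum over the matrix elements of $X$, and the individual matrix elements of $Y$.

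The key step is to show that $|\mel{n}{Y}{m}|\le\|Y\|_\infty$ for every pair $(n,m)$. This is a one-line consequence of the Cauchy–Schwarz inequality combined with the definition of the operator (spectral) norm: $|\mel{n}{Y}{m}|\le\|\ket{n}\|_2\,\|Y\ket{m}\|_2\le\|Y\|_\infty$, where I use the normalization $\|\ket{n}\|_2=\|\ket{m}\|_2=1$ and the defining property $\|Y\ket{m}\|_2\le\|Y\|_\infty\|\ket{m}\|_2$. Substituting this uniform bound into the double sum lets me pull $\|Y\|_\infty$ outside, leaving $|\tr{XY}|\le\|Y\|_\infty\sum_{m,n}|\mel{m}{X}{n}|$, which is exactly the claimed inequality.

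I do not expect any genuine obstacle here, since the result is an elementary consequence of writing the trace in components and estimating a single matrix element. The only point requiring care is the normalization convention: the argument uses that the basis is orthonormal, so I would state this explicitly; if one allowed a merely orthogonal (un-normalized) basis, the same reasoning would instead produce the inequality with an extra factor $\max_n\|\ket{n}\|_2^2$. With the orthonormal convention in force, the proof reduces to the expansion of the trace, the triangle inequality, and one application of Cauchy–Schwarz.
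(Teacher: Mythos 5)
Your proof is correct, but it takes a different route from the paper. The paper proves the proposition in two steps via the trace norm: it first invokes the H\"older-type inequality $|\tr{XY}|\le\|Y\|_\infty\|X\|_1$, where $\|X\|_1$ is the sum of singular values of $X$, and then bounds the trace norm by the entrywise sum, $\|X\|_1\le\sum_{m,n}|\mel{m}{X}{n}|$. You instead bypass the trace norm entirely: expanding $\tr{XY}=\sum_{m,n}\mel{m}{X}{n}\mel{n}{Y}{m}$ in the given basis, applying the triangle inequality, and bounding each matrix element $|\mel{n}{Y}{m}|\le\|Y\|_\infty$ by Cauchy--Schwarz. Your argument is more elementary --- it needs no facts about Schatten norms, only the resolution of identity and the definition of the operator norm --- and it makes transparent exactly where the orthonormality of $\{\ket{n}\}$ enters, a point you rightly flag (the paper writes ``orthogonal basis'' but implicitly assumes normalization, as its own entrywise bound on $\|X\|_1$ also requires). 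The paper's route is slightly more modular in that the intermediate inequality $|\tr{XY}|\le\|Y\|_\infty\|X\|_1$ is basis-independent and reusable, with the basis dependence isolated in the second step; but for the purpose at hand the two proofs deliver the same bound with the same hypotheses, and yours is self-contained.
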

\begin{proof}
By applying the inequalities $\tr{{X}{Y}}\le\|{Y}\|_\infty\|{X}\|_1$ and $\|{X}\|_1\le\sum_{m,n}|\mel{m}{{X}}{n}|$, one can immediately complete the proof.
\end{proof}

\begin{proposition}\label{prop:z.diff.ine}
For any complex numbers $z_1$ and $z_2$ with a negative real part, the following inequality holds:
\begin{equation}\label{eq:z.ine.tmp1}
    \qty|\frac{e^{z_1}-e^{z_2}}{z_1-z_2}|\le 1.
\end{equation}
\end{proposition}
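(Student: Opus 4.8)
The plan is to represent the difference quotient as an average of the exponential function along the line segment joining $z_1$ and $z_2$, and then to control its modulus using only the real parts of the two points. The crucial identity I would establish first is
\begin{equation}
\frac{e^{z_1}-e^{z_2}}{z_1-z_2}=\int_0^1\dd{s}\,e^{sz_1+(1-s)z_2},
\end{equation}
which one checks directly by factoring $e^{z_2}$ out of the integrand and using $\int_0^1\dd{s}\,e^{s(z_1-z_2)}=(e^{z_1-z_2}-1)/(z_1-z_2)$. This is precisely the representation $\int_0^1\dd{s}\,e^{(1-s)x+sy}=(e^x-e^y)/(x-y)$ already invoked earlier in the manuscript, now applied to complex arguments; when $z_1=z_2$ the left-hand side is read as its limit $e^{z_1}$ and the same formula remains valid.

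From there the estimate is immediate. I would apply the triangle inequality for integrals, followed by the elementary fact $|e^w|=e^{\Re{w}}$, to obtain
\begin{equation}
\qty|\frac{e^{z_1}-e^{z_2}}{z_1-z_2}|\le\int_0^1\dd{s}\,\qty|e^{sz_1+(1-s)z_2}|=\int_0^1\dd{s}\,e^{s\Re{z_1}+(1-s)\Re{z_2}}.
\end{equation}
The hypothesis $\Re{z_1}\le 0$ and $\Re{z_2}\le 0$ makes the exponent a convex combination of nonpositive numbers, hence nonpositive for every $s\in[0,1]$, so the integrand is bounded by $1$ and the integral is at most $\int_0^1\dd{s}=1$, yielding the claim.

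The argument is essentially routine once the integral representation is in hand, so I do not anticipate a genuine obstacle; the only conceptual step is recognizing that the difference quotient is the mean value of $e^z$ along the segment, after which positivity of the measure $\dd{s}$ and the bound $e^{\Re{w}}\le 1$ for $\Re{w}\le 0$ do all the work. The single point requiring a word of care is the degenerate case $z_1=z_2$, which is handled by continuity (the limiting value $e^{z_1}$ satisfies $|e^{z_1}|=e^{\Re{z_1}}\le 1$ directly).
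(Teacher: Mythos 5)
Your proof is correct and follows essentially the same route as the paper: both write the difference quotient as $\int_0^1\dd{s}\,e^{sz_1+(1-s)z_2}$, apply the triangle inequality, and bound the integrand by $1$ using the nonpositivity of the real parts. Your explicit treatment of the degenerate case $z_1=z_2$ is a small additional care point not spelled out in the paper, but otherwise the arguments coincide.
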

\begin{proof}
Since $\Re{z_1}\le 0$ and $\Re{z_2}\le 0$, we have $|e^{sz_1+(1-s)z_2}|\le 1$ for any $0\le s\le 1$.
Consequently, we can prove Eq.~\eqref{eq:z.ine.tmp1} as follows:
\begin{align}
	\qty|\frac{e^{z_1}-e^{z_2}}{z_1-z_2}|&=\qty|\int_0^1\dd{s}e^{sz_1+(1-s)z_2}|\notag\\
	&\le \int_0^1\dd{s}\qty|e^{sz_1+(1-s)z_2}|\notag\\
	&\le\int_0^1\dd{s}\notag\\
	&=1. 
\end{align}
\end{proof}

\twocolumngrid

\end{document}